\documentclass[aps,prl,reprint,superscriptaddress,longbibliography]{revtex4-2}

\usepackage{amsmath,amssymb,amsthm,bm,mathtools}
\usepackage{graphicx}
\usepackage{xcolor}
\usepackage[colorlinks=true,allcolors=blue]{hyperref}

\newtheorem{theorem}{Theorem}
\newtheorem{proposition}{Proposition}

\newtheorem{lemma}{Lemma}
\newtheorem{corollary}{Corollary}
\theoremstyle{definition}

\theoremstyle{plain}

\newcommand{\DKL}{D_{\mathrm{KL}}}
\newcommand{\Dcoh}{D_{\mathrm{coh}}}
\newcommand{\Dloc}{D_{\mathrm{KL,loc}}^{\min}}
\newcommand{\Scross}{D_{\mathrm{cross}}^{(0)}}
\newcommand{\Scrossrho}{D_{\mathrm{cross}}^{(\rho)}}
\newcommand{\Snull}{\mathbf S_{\mathrm{null}}}
\newcommand{\Sprod}{\mathbf S_{\mathrm{prod}}}
\newcommand{\Strue}{\mathbf S_{\mathrm{true}}}
\newcommand{\Stwo}{\mathbf S}
\newcommand{\Pfun}{P}

\newcommand{\diag}{\mathrm{diag}}
\newcommand{\strictdiag}{\textit{diagonal null}}
\newcommand{\EPR}{\Phi}
\newcommand{\EPRtot}{\Phi_{\mathrm{total}}}

\begin{document}

\title{Cross-Spectral Witness for Hidden Nonequilibrium Beyond the Scalar Ceiling}

\author{Yuda Bi}
\email{ybi3@gsu.edu}
\affiliation{Translational Research in Neuroimaging and Data Science (TReNDS), Georgia State University, Atlanta, Georgia 30303, USA}
\author{Vince D. Calhoun}
\affiliation{Translational Research in Neuroimaging and Data Science (TReNDS), Georgia State University, Atlanta, Georgia 30303, USA}
\affiliation{School of Electrical and Computer Engineering, Georgia Institute of Technology, Atlanta, Georgia 30332, USA}

\date{\today}

\begin{abstract}
Partial observation is a pervasive obstacle in nonequilibrium physics: coarse graining may absorb hidden forcing into an apparently equilibrium-like reduced description, so a driven system can look reversible through the only variables one can measure.
For scalar Gaussian observables of linear stochastic systems, no time-irreversibility statistic can detect the underlying drive.
The Lucente--Crisanti ceiling constrains what one channel carries; what two channels carry is a different question, with a sharp closed-form answer.
Two simultaneously observed channels retain an off-diagonal cross-spectral sector inaccessible to any scalar reduction; under channel-separable multiplicative structure the observed-channel response factors cancel identically, leaving a closed-form cross-spectral witness controlled only by the hidden spectrum, the loadings, and the innovation scales, strictly positive at every nonzero cross-coupling including at exact timescale coalescence where every scalar reduction is blind.
Within general CSM this certifies shared hidden-sector drive; under the additional one-way coupling assumption the witness identifies the total entropy production rate at leading order with a square-root scaling.
\end{abstract}
\maketitle

\section{Introduction}

A fundamental problem haunts partial observation of driven systems. When only a subset of coordinates is monitored in a coupled stochastic system, a hidden external drive can be entirely absorbed into effective parameters of the visible channel---renormalized friction, renormalized noise amplitude---until the single-probe data looks indistinguishable from thermal equilibrium \cite{Seifert2012,KawaiParrondoVandenBroeck2007,RoldanParrondo2010,Esposito2012,MehlLanderSeifert2012,Maes2020}. The driven system is dissipating energy, but the observer fitting a one-channel effective model is blind to it.

A passive colloidal bead in an active bath satisfies the fluctuation--dissipation relation despite being driven \cite{FodorEtAl2016,BiskerEtAl2017}; in climate dynamics, subsurface thermocline forcing of sea-surface temperature is of order unity \cite{Hasselmann1976,FrankignoulHasselmann1977,PenlandSardeshmukh1995} yet invisible from the surface alone. Lucente et al.\ proved that no time-irreversibility statistic built from a single scalar Gaussian channel can detect the underlying drive \cite{LucenteEtAl2022}---the \emph{scalar Gaussian ceiling}. Yet the full system is far from equilibrium \cite{HorowitzGingrich2020,KawaguchiNakayama2013}. Where, then, is the nonequilibrium hiding?

In this Letter, we show that it hides in the \emph{off-diagonal cross-spectral sector} between two simultaneously observed channels---a sector that no scalar reduction can access. Under channel-separable multiplicative structure (CSM)---encompassing Mori--Zwanzig reductions, common-bath couplings, and multichannel linear response to shared latent forcing---the observed-channel response factors cancel identically. What survives is a closed-form cross-spectral witness, Eq.~\eqref{eq:rho-csm}, controlled solely by the hidden spectrum, the loadings, and the innovation scales, strictly positive wherever the hidden drive couples both channels---including at exact timescale coalescence where every single-channel statistic is provably blind. Under one-way coupling it identifies the total entropy production rate at leading order. The scalar ceiling is not revoked by a more powerful scalar statistic---it is bypassed by changing the geometry of the measurement.

\section{Hidden Drive Through Two Observed Channels}

\emph{Physical setting.}---Consider two probes---two laser-tracked colloidal beads in a shared active bath, or two climate indices driven by a common subsurface mode---each governed by its own local relaxation dynamics and both subject to a shared persistent forcing that evolves independently of the probes. The observer records only the probe positions; the forcing mode is unobserved. This is the discrete-time analogue of a Mori--Zwanzig reduction \cite{Zwanzig1961,Mori1965,ChorinHaldKupferman2000} with a finite-rank retained latent memory kernel. The single-mode benchmark, in which one hidden persistent mode drives both channels through a unit-norm loading vector $u_1^2+u_2^2=1$, makes the mechanism closed-form; the general channel-separable latent sector is introduced in Eq.~\eqref{eq:csm-general} below. The benchmark equations of motion are:
\begin{equation}
\begin{aligned}
X_{t+1}^{(1)} &= a_1 X_t^{(1)} + \lambda u_1 F_t + \epsilon_{t+1}^{(1)},\\
X_{t+1}^{(2)} &= a_2 X_t^{(2)} + \lambda u_2 F_t + \epsilon_{t+1}^{(2)},
\end{aligned}
\label{eq:model-x}
\end{equation}
\begin{equation}
\begin{aligned}
F_{t+1} &= bF_t+\eta_{t+1},\\
u_1^2+u_2^2 &= 1.
\end{aligned}
\label{eq:model-f}
\end{equation}
Here $|a_1|,|a_2|,|b|<1$; the innovations $\epsilon_t^{(i)}\sim\mathcal N(0,\sigma_{\epsilon_i}^2)$ and $\eta_t\sim\mathcal N(0,\sigma_\eta^2)$ are mutually independent; and only $(X_t^{(1)},X_t^{(2)})$ are observed. The coupling $\lambda$ controls the strength of the shared drive. In the benchmark, the hidden mode $F_t$ evolves independently of the observed channels (\emph{one-way coupling}), as a bath or external drive would. The filter-free witness derived below requires only CSM, not one-way coupling; one-way coupling is an additional assumption needed solely for the thermodynamic bridge of Sec.~\ref{sec:thermo}. Write
\[
\Pfun_c(\omega)=|1-ce^{-i\omega}|^2=1+c^2-2c\cos\omega.
\]
Define
\begin{equation}
\begin{aligned}
\mathbf D_\epsilon(\omega)
&=
\diag\!\left(
\frac{\sigma_{\epsilon_1}^2}{\Pfun_{a_1}(\omega)},
\frac{\sigma_{\epsilon_2}^2}{\Pfun_{a_2}(\omega)}
\right),\\
\mathbf h(\omega)
&=
\begin{pmatrix}
\dfrac{u_1}{1-a_1 e^{-i\omega}}\\[4pt]
\dfrac{u_2}{1-a_2 e^{-i\omega}}
\end{pmatrix}.
\end{aligned}
\label{eq:Deps-h}
\end{equation}
Then the exact observed spectral matrix can be written compactly as
\begin{equation}
\mathbf S_{\mathrm{true}}(\omega)
=
\mathbf D_\epsilon(\omega)
+
\frac{\lambda^2\sigma_\eta^2}{\Pfun_b(\omega)}
\mathbf h(\omega)\mathbf h(\omega)^\ast .
\label{eq:true-matrix}
\end{equation}
The compact form~\eqref{eq:true-matrix} is the rank-one instance of a general structure. For any causal stable linear filter $H_i(z)$ governing channel~$i$, with innovation spectrum $\sigma_{\epsilon_i}^2|H_i(e^{-i\omega})|^2$ and finite-rank shared latent spectrum $K_F(\omega)=L\,S_Z(\omega)L^\ast$ entering each channel at the same dynamical point as the innovation, the observed spectral matrix takes the channel-separable multiplicative form
\begin{equation}
\Strue(\omega)=\mathbf H(\omega)\bigl[\Sigma_\epsilon+K_F(\omega)\bigr]\mathbf H(\omega)^\ast,
\label{eq:csm-general}
\end{equation}
where $\mathbf H(\omega)=\diag(H_1(e^{-i\omega}),H_2(e^{-i\omega}))$, $\Sigma_\epsilon=\diag(\sigma_{\epsilon_1}^2,\sigma_{\epsilon_2}^2)$, $L\in\mathbb{C}^{2\times r}$ is a fixed loading matrix, and $S_Z(\omega)$ is an $r\times r$ Hermitian positive-semidefinite latent spectral density. CSM is the multichannel analogue of a local linear response per channel dressed by a shared finite-dimensional latent spectral sector; it arises from Mori--Zwanzig reductions with a finite-rank retained latent sector~\cite{Zwanzig1961,GivonKupfermanStuart2004}, common-mode bath coupling, and multichannel linear response to a shared latent forcing bank. Equation~\eqref{eq:true-matrix} is the rank-one benchmark $r=1$, $L=\lambda u$, $S_Z=S_F=\sigma_\eta^2/\Pfun_b$, used throughout for closed forms (componentwise expansion in Appendix~B).

CSM is broad but not universal: it requires the hidden drive to enter each channel at the same dynamical point as the channel's own innovation noise, with fixed frequency-independent loadings and linear response. Each probe's local dynamics shapes both its own thermal fluctuations and the externally imposed drive in the same way---a colloidal bead in a harmonic trap responds to both thermal kicks and external forcing through the same overdamped transfer function. CSM \emph{fails} when the latent drive enters through a different pathway than the innovation (delayed or band-shifted couplings), when loadings are state-dependent, or when channel responses are strongly nonlinear. CSM is a sharp class, and the filter cancellation it implies is exact. In practice, CSM can be tested empirically: the filter-free prediction requires $\Dcoh$ to be invariant when different AR($p$) models are fitted to the observed channels. If the witness varies with the assumed channel order, CSM is violated for the data at hand.

\emph{The null hypothesis: independent probes.}---The natural baseline for ``no shared hidden forcing'' is two independent probes, each described by its own effective AR(1) model---the \strictdiag:
\begin{equation}
\Snull(\omega;\theta)=\diag\!\left(\frac{\tilde\sigma_1^2}{\Pfun_{\tilde a_1}(\omega)},\;\frac{\tilde\sigma_2^2}{\Pfun_{\tilde a_2}(\omega)}\right),
\label{eq:strict-diag-null}
\end{equation}
with $\theta=(\tilde a_1,\tilde\sigma_1^2,\tilde a_2,\tilde\sigma_2^2)$.
This is the spectral matrix of two probes in separate thermal baths with no coupling between them. Any cross-channel structure in the data that this family cannot fit is evidence of shared input. Enriching the null with off-diagonal structure concedes common input at the null level \cite{Brillinger2001,Geweke1982,Geweke1984} and is treated in the enrichment-boundary analysis below.

Alongside the parametric null~\eqref{eq:strict-diag-null}, the observed joint spectrum has an exact scalar-marginal reduction
\begin{equation}
\Sprod(\omega):=
\diag\!\left(
S_{11}^{\mathrm{true}}(\omega),
S_{22}^{\mathrm{true}}(\omega)
\right),
\label{eq:diag-prod-null}
\end{equation}
which preserves each channel's power spectrum exactly while removing all joint structure---what a pair of independent single-channel observers would infer. The key point below is that the residual's spectral profile is intrinsic to the shared latent sector, carrying no trace of either probe's local dynamics.

\section{The Hidden Sector Survives Scalar Marginalization}

\emph{What the scalar marginals cannot see.}---Two simultaneously observed channels carry information that neither marginal alone possesses: the joint dependence between them, encoded in the normalized cross-coherence $\rho^2(\omega):=|S_{12}^{\mathrm{true}}(\omega)|^2/[S_{11}^{\mathrm{true}}(\omega)S_{22}^{\mathrm{true}}(\omega)]\in[0,1]$. The information gap to the scalar marginals is the single real-valued observable
\begin{equation}
\Dcoh := -\frac{1}{4\pi}\int_{-\pi}^{\pi}\log\!\bigl(1-\rho^2(\omega)\bigr)\,d\omega,
\label{eq:exact-coh}
\end{equation}
the classical coherence integral of Gaussian stationary processes \cite{Geweke1982,Komaee2020}; we call it the \emph{cross-spectral witness}. The physical content of this Letter is not that $\Dcoh$ exists---that is a classical statement about correlated Gaussian processes---but what its spectral profile becomes once the hidden drive enters through a channel-separable shared latent sector.

\emph{Why the observed-channel filters cancel.}---The reason is geometric. Under CSM~\eqref{eq:csm-general}, the cross-spectrum $S_{12}^{\mathrm{true}}$ carries a factor $H_1(\omega)\overline{H_2(\omega)}$---the product of one channel's causal filter with the complex conjugate of the other. Each scalar marginal carries $|H_i(\omega)|^2$. In the ratio $\rho^2=|S_{12}|^2/(S_{11}S_{22})$, the filter moduli $|H_1|^2|H_2|^2$ appear identically in numerator and denominator and cancel pointwise in frequency:
\begin{equation}
\rho^2(\omega)=\frac{|K_{12}(\omega)|^2}{\bigl(\sigma_{\epsilon_1}^2+K_{11}(\omega)\bigr)\bigl(\sigma_{\epsilon_2}^2+K_{22}(\omega)\bigr)}.
\label{eq:rho-csm}
\end{equation}
The witness is intrinsic to the hidden sector: only the latent-sector entries $K_{ij}$ and the innovation scales $\sigma_{\epsilon_i}^2$ survive. This is the spectral analogue of common-mode rejection in multichannel signal processing \cite{Brillinger2001}: each channel's local response acts as a common factor that divides out when the two channels are compared, leaving only the shared hidden content. Hermitian positive-semidefiniteness of $K_F$ ensures $\rho^2\in[0,1)$ pointwise (via Cauchy--Schwarz), and the witness is strictly positive whenever $K_{12}\ne 0$ on a set of positive measure---regardless of the observed channel poles, including at exact coalescence between observed and hidden timescales. Derivation and uniform validity across finite rank are in Appendix~C.

Three consequences follow: false equilibrium appearances are a projection artifact of single-channel observation, not an intrinsic property; enlarging the measurement from one to two channels changes the retained information class qualitatively; and the surviving invariant is controlled solely by the hidden spectrum, absent from any scalar reduction. Within general CSM, Eq.~\eqref{eq:rho-csm} certifies a shared hidden-sector drive; the thermodynamic identification of that drive as entropy production requires the further specialization to one-way coupling (Sec.~\ref{sec:thermo}).

In the single-mode benchmark $K_F=\lambda^2 S_F\,uu^\top$ of Eq.~\eqref{eq:true-matrix}, Eq.~\eqref{eq:rho-csm} reduces to
\begin{equation}
\rho^2(\omega)=\frac{\lambda^4 u_1^2 u_2^2\,S_F(\omega)^2}{\bigl(\sigma_{\epsilon_1}^2+\lambda^2 u_1^2 S_F(\omega)\bigr)\bigl(\sigma_{\epsilon_2}^2+\lambda^2 u_2^2 S_F(\omega)\bigr)},
\label{eq:rho-exact}
\end{equation}
the closed form used throughout for explicit calculations. Physically, $\rho^2(\omega)$ measures the fraction of joint spectral power at frequency $\omega$ attributable to the shared hidden mode: at $\lambda=0$ the probes are uncorrelated ($\rho^2\equiv 0$); at large $\lambda$ the hidden mode dominates ($\rho^2\to 1$); in between, the spectral shape traces $S_F(\omega)$. This rank-one case exhibits the mechanism through two sharp fingerprints---a universal quartic cross law and a coalescence singularity that cross spectra remove---which we now develop as physical content reconnecting the witness to the scalar-impossibility literature.

\section{Rank-One Benchmark: Quartic Law, Coalescence, and Enrichment}

The rank-one closed form~\eqref{eq:rho-exact} is exact at all coupling strengths. We now expand it at weak coupling $\lambda\to 0$, where the quartic structure of the cross-spectral witness becomes explicit and reveals why the scalar ceiling is geometric rather than fundamental. \emph{The filter-free witness~\eqref{eq:rho-csm} and its strict positivity are proven for all finite-rank CSM (Appendix~C, Theorem~1). The quartic closed forms below are proven for rank one; the same qualitative structure---quartic onset, coalescence survival, filter-free cross coefficient---persists at higher ranks and is verified numerically to machine precision in Appendix~L.}

\emph{Quartic decomposition.}---At weak coupling $\lambda\to 0$, the spectral residual along the best-fit diagonal branch decomposes into two channelwise auto contributions---one from each probe's marginal mismatch---and a joint cross contribution from the off-diagonal sector (Appendices~C and~D):
\begin{equation}
\Dloc(\lambda)=D_{\mathrm{auto},1}^{\min}+D_{\mathrm{auto},2}^{\min}+\Scross(\lambda)+O(\lambda^6),
\label{eq:decomposition-structural}
\end{equation}
with $\Scross(\theta^\star(\lambda),\lambda)=\Scross(\theta_0,\lambda)+O(\lambda^6)$, so the cross contribution is invariant under diagonal reparametrization at quartic order. This decomposition makes the detection hierarchy explicit: a single-channel observer can always adjust its effective parameters $(\tilde a_i, \tilde\sigma_i^2)$ to absorb part of the hidden perturbation into each marginal spectrum. But no channel-by-channel adjustment can reach the off-diagonal sector; absorbing the cross contribution requires enlarging the reduced description itself.

\emph{Filter cancellation and the quartic cross law.}---At the null point, the normalized cross-coherence collapses to a purely latent object. For any causal stable filter $H_i(z)$ governing channel $i$, the observed-channel factors cancel exactly (Appendix~E):
\begin{equation}
\frac{|S_{12}^{\mathrm{true}}(\omega)|^2}{S_{11}^{0}(\omega)S_{22}^{0}(\omega)}
=\frac{\lambda^4 u_1^2 u_2^2}{\sigma_{\epsilon_1}^2\sigma_{\epsilon_2}^2}\,S_F(\omega)^2.
\label{eq:cancellation}
\end{equation}
The cross-spectral fingerprint of the hidden mode is insensitive to the local dynamics of each probe: whether the probes are overdamped Brownian particles, resonant oscillators, or higher-order systems, the ratio that enters the witness sees only the shared forcing. The leading cross contribution inherits this independence:
\begin{equation}
\Scross(\lambda)=\frac{\lambda^4 u_1^2 u_2^2}{\sigma_{\epsilon_1}^2\sigma_{\epsilon_2}^2}\,\mathcal I_F+O(\lambda^6),
\label{eq:cross-general}
\end{equation}
with $\mathcal I_F:=(4\pi)^{-1}\int_{-\pi}^{\pi}S_F(\omega)^2\,d\omega$ depending only on the hidden spectral density. For the AR(1) hidden mode, $C_{\mathrm{cross}}=u_1^2 u_2^2\sigma_\eta^4(1+b^2)/[2\sigma_{\epsilon_1}^2\sigma_{\epsilon_2}^2(1-b^2)^3]$. This observed-dynamics independence does not follow from the Geweke decomposition alone \cite{Geweke1982,Geweke1984}: it is specific to the scalar-marginal reduction inside CSM, where the filters act as a common multiplicative factor that divides out.

\emph{Coalescence singularity removal.}---Coalescence means the observed probes and the hidden driver oscillate at the same characteristic timescale ($a_i=b$). For a single-channel observer this is the worst case: the hidden mode's spectral shape becomes indistinguishable from the probe's own response, and the scalar detection coefficient vanishes through the factor $(a_i-b)^2$. Each auto contribution inherits this singularity (Appendix~A):
\begin{equation}
C_{\mathrm{auto}}^{(i)}=\frac{u_i^4\sigma_\eta^4}{2\sigma_{\epsilon_i}^4}\frac{b^2(a_i-b)^2}{(1-b^2)^3(1-a_ib)^2},\qquad i=1,2,
\label{eq:auto-coeff}
\end{equation}
so the full quartic law reads
\begin{equation}
\Dloc(\lambda)=\bigl(C_{\mathrm{auto}}^{(1)}+C_{\mathrm{auto}}^{(2)}+C_{\mathrm{cross}}\bigr)\lambda^4+O(\lambda^6).
\label{eq:full-quartic}
\end{equation}
At exact coalescence $a_1=a_2=b$, both auto coefficients vanish, but the cross coefficient $C_{\mathrm{cross}}$---which depends only on the hidden spectrum, not on the observed poles---stays intact:
\begin{equation}
\Dloc(\lambda)=C_{\mathrm{cross}}\lambda^4+O(\lambda^6)>0\quad\text{whenever } u_1 u_2\ne 0.
\label{eq:coalescence-positive}
\end{equation}
This is the singularity removal visible in Fig.~\ref{fig:main-geometry}: the projection that erases the hidden mode from each probe's marginal spectrum does not erase it from the joint spectrum. A single probe cannot tell whether its fluctuations come from its own thermal noise or from the hidden driver when both have the same correlation time, but \emph{two} probes can, because the hidden driver correlates them in a way that independent thermal noise cannot.

\begin{figure*}[t]
  \centering
  \includegraphics[width=\textwidth]{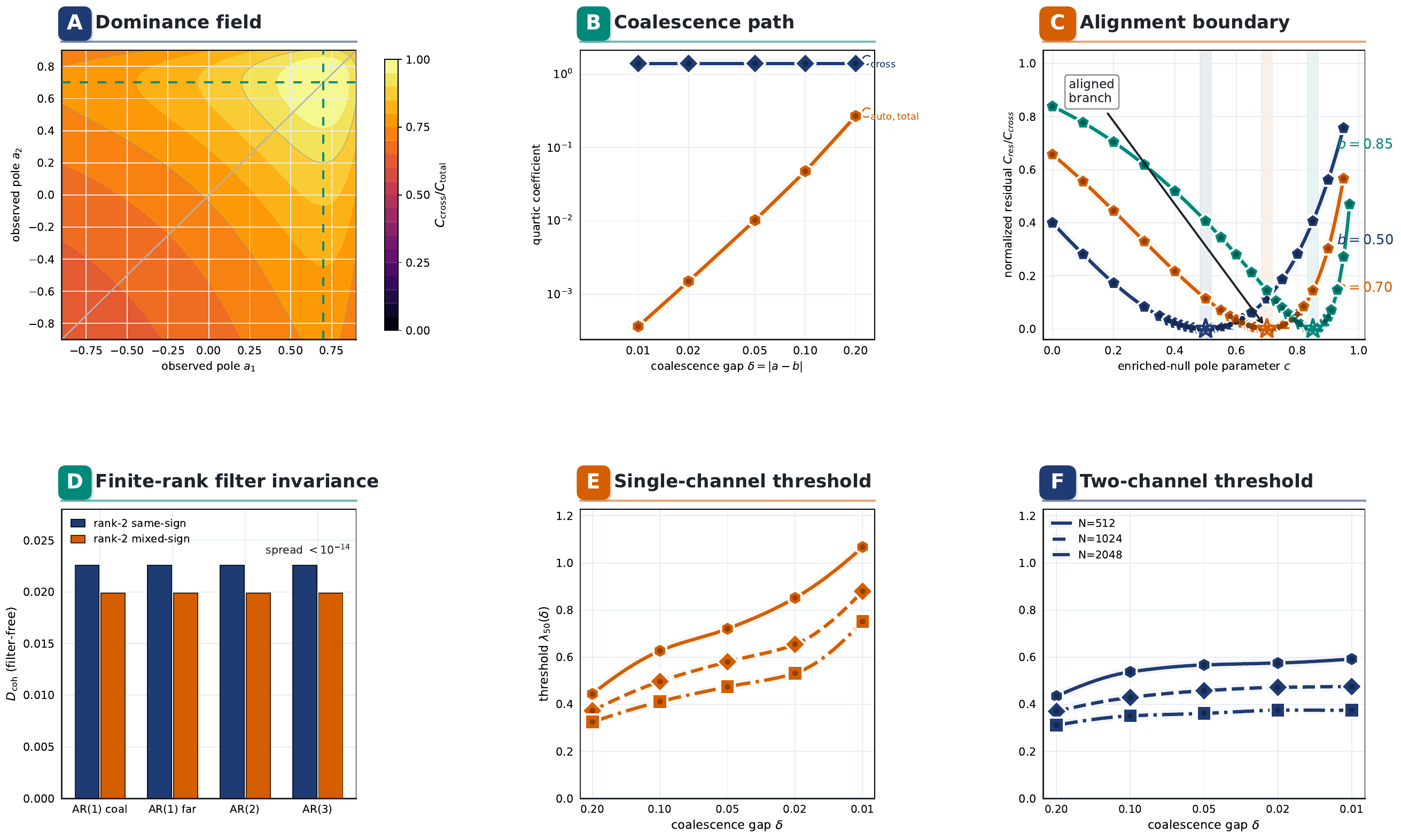}
  \caption{\textbf{The scalar ceiling is a projection singularity, and cross spectra cross it.} \textbf{A,} Fractional dominance $C_{\mathrm{cross}}/C_{\mathrm{total}}$ across the observed-pole plane $(a_1,a_2)$ for the rank-one benchmark; near coalescence $a_1=a_2=b$ (dashed lines, $b=0.7$) the cross-spectral witness carries the full residual. \textbf{B,} Along the coalescence path $a_1=a_2=b+\delta$, the auto contribution collapses while the witness remains finite. \textbf{C,} Enriched-null absorption: only a family aligned with the hidden sector ($c=b$) reabsorbs the witness, for $b=0.5,0.7,0.85$. \textbf{D,} Finite-rank filter invariance: $\Dcoh$ is identical across AR(1), AR(2), and AR(3) observed filters for rank-$2$ latent sectors (same-sign blue, mixed-sign orange); spread $<10^{-14}$ verifies Eq.~\eqref{eq:rho-csm}. \textbf{E,F,} Minimum coupling $\lambda_{50}(\delta)$ to reveal the hidden drive in single-channel (E) and two-channel (F) observation, $N=512,1024,2048$. The single-channel threshold diverges as $\delta\to 0$; the two-channel threshold stays bounded---the numerical imprint of the singularity removal. Symmetric loadings $u_1=u_2=1/\sqrt 2$ in E,F; supplementary figures cover asymmetric cases. Identification protocol in Appendix~H.}
  \label{fig:main-geometry}
\end{figure*}

\emph{Enrichment boundary: alignment, not flexibility.}---\label{sec:boundary}Can a richer reduced family reabsorb the witness at coalescence? The hidden drive contributes a single spectral direction to the cross sector. A richer fit can erase the part it covers, but no more. The integrated cross residual is the squared mismatch
\begin{equation}
\Scrossrho=\|q-\Pi_{\mathcal E_\rho}q\|^2,\qquad 0\le\Scrossrho\le\Scross,
\label{eq:enriched-span}
\end{equation}
where $q(\omega):=S_{12}^{\mathrm{true}}(\omega)/\sqrt{S_{11}^0(\omega)S_{22}^0(\omega)}$ is the normalized cross-spectral direction and $\Pi_{\mathcal E_\rho}$ projects onto the enriched family's cross-spectral span. Exact removal, $\Scrossrho=0$, requires the enriched family's cross-spectral direction to coincide with the hidden spectral shape---alignment, not flexibility (Appendix~G). Figure~\ref{fig:main-geometry}C confirms that only when the enriched parameter $c$ matches the hidden persistence $b$ does the residual vanish \cite{Brillinger2001,BresslerSeth2011,BastosSchoffelen2016,Friston2011}; in every other direction of enrichment, the witness remains strictly positive. This converts the question ``can a richer effective model absorb the nonequilibrium signal?'' into a question about spectral alignment with the hidden mode---a physical constraint, not a fitting limitation.

\section{Numerical Verification and Robustness}

\emph{Singularity removal at finite sample size.}---Panels~E and~F of Fig.~\ref{fig:main-geometry} show the singularity removal in simulated data at finite trajectory length ($N\in\{512,1024,2048\}$; identification protocol in Appendix~H). As the coalescence gap $\delta=a_i-b\to 0$, the minimum coupling $\lambda_{50}^{\mathrm{single}}(\delta)$ needed for single-channel detection diverges: the hidden mode becomes spectroscopically dark to one probe. The two-channel threshold $\lambda_{50}^{\mathrm{two}}(\delta)$, in contrast, stays bounded and approaches the exact prediction from above as $N$ grows. The hidden drive remains visible to two probes at couplings where single-probe identification has already failed. This split is robust across asymmetric loadings (Supplementary Figs.~\ref{fig:supp-r50}--\ref{fig:supp-asymmetric}).

\emph{Structural robustness.}---\label{sec:robustness}The filter-free cancellation is a structural property of the physical setup---a shared latent sector coupling two channels through their own local responses---not an artifact of the AR(1) benchmark. We verify this along three physical axes (Appendix~I, Fig.~\ref{fig:robustness}; Appendix~L). (i)~\emph{Bidirectional coupling}: adding feedback from the probes to the latent sector, with feedback strength up to $\mu/\gamma_f=0.5$, preserves the structural witness; the witness tracks shared-sector coupling regardless of directionality. (The thermodynamic bridge of Eq.~\eqref{eq:epr-sqrt} is specific to the one-way subclass and is not guaranteed beyond it.) (ii)~\emph{Higher-order probe dynamics}: replacing the AR(1) single-pole probe response with AR(2) and AR(3) multi-pole responses leaves $\Dcoh$ invariant at the $10^{-14}$ level---the filters cancel to machine precision. (iii)~\emph{Nonlinear channels}: adding cubic damping ($\kappa\le 0.015$) does not produce spurious identifications, and a witness-guided test built directly from~\eqref{eq:rho-csm} retains its discriminating power without refitting any hidden-driver model. Negative controls with $K_{12}\equiv 0$ yield $\Dcoh=0$ across all tested classes (Appendix~L), confirming that the witness reflects shared hidden drive rather than spurious cross-channel correlation.

\section{From Witness to Entropy Production}
\label{sec:thermo}

\emph{One-way coupling: when common drive is dissipation.}---The structural witness of Eq.~\eqref{eq:rho-csm} certifies a shared hidden drive under general CSM, with no assumption on the direction of coupling. When the latent process is dynamically autonomous of the probes---the system is \emph{one-way coupled}---this common drive acquires thermodynamic content: it is the sole source of entropy production in the joint system, because any coupling that breaks detailed balance must pass through the hidden sector. One-way coupling is not a technical limitation but a physical identification condition: it holds whenever the hidden sector is macroscopically large compared to the probes, so that the probes cannot appreciably back-react on the drive. This is the natural setting for a thermal bath driving colloidal probes at a different temperature, slow climate modes forcing surface variables, or neuromodulators driving cortical circuits.

To connect to the standard entropy production formalism we pass to the continuous-time limit of the benchmark: the Ornstein--Uhlenbeck process $dX_i=-\gamma_i X_i\,dt+\lambda u_i F\,dt+\sqrt{2D_i}\,dW_i$ for $i=1,2$, $dF=-\gamma_f F\,dt+\sqrt{2D_f}\,dW_f$, where the hidden mode $F$ evolves independently of the probes. The steady-state entropy production rate of the joint $(X_1,X_2,F)$ process is exactly quadratic in the coupling (Appendix~K):
\begin{equation}
\EPRtot=\alpha_2\,\lambda^2,
\label{eq:epr-exact}
\end{equation}
with $\alpha_2=u_1^2 D_f/[D_1(\gamma_1+\gamma_f)]+u_2^2 D_f/[D_2(\gamma_2+\gamma_f)]>0$. Physically, $\EPRtot$ measures the rate at which the hidden forcing degrades free energy by pushing the probes away from their equilibrium configurations; it is proportional to $\lambda^2$ because entropy production is a current $\times$ force quantity, and both current and force scale as $\lambda$ at leading order. Combined with $\Scross=C_{\mathrm{cross}}\lambda^4+O(\lambda^6)$, the bridge between the observable and the thermodynamic invariant is
\begin{equation}
\EPRtot=\frac{\alpha_2}{\sqrt{C_{\mathrm{cross}}}}\,\sqrt{\Scross}+O(\lambda^4).
\label{eq:epr-sqrt}
\end{equation}
The square-root scaling reflects the order mismatch: entropy production is quadratic in $\lambda$, the witness quartic, and the bridge between them is necessarily a square root---making $\Scross$ a quantitative dissipation probe, not merely a qualitative detector. Since the single-channel marginal entropy production is identically zero \cite{LucenteEtAl2022}, the dissipative content of the system passes entirely through the cross-spectral witness at leading order---precisely the quantity that the scalar ceiling forbids any single probe from seeing. At strong coupling the leading-order bridge becomes inaccurate, but the full nonperturbative witness $\Dcoh$ of Eq.~\eqref{eq:exact-coh} remains well-defined and strictly positive for all $\lambda>0$.

\section{Illustration on NOAA Climate Indices}
\label{sec:climate}

As a proof of concept on real data, we apply the witness to two pairs of NOAA monthly climate indices ($N\approx 900$), calibrated against 999 phase-randomized surrogates that destroy cross-spectral structure while preserving each channel's marginal spectrum. For Ni\~no~3.4 sea-surface temperature anomaly versus the Southern Oscillation Index---a tightly coupled pair within the ENSO basin---the witness lies $18.8\sigma$ above the null mean ($p\le 10^{-3}$), consistent with the strong atmosphere--ocean coupling known from the Walker circulation. For Ni\~no~3.4 versus the Atlantic Multidecadal Oscillation---an inter-basin teleconnection---the witness is $3.5\sigma$ above null, with $\rho^2(\omega)$ peaked in the decadal band (Fig.~\ref{fig:climate}), consistent with known low-frequency inter-basin coupling. Both computations use only Eq.~\eqref{eq:rho-csm} and the phase-randomized null, with no parametric hidden-driver model; we make no causal or directional claim from either pair. As a consistency check, smoothed-periodogram, Welch, and multitaper estimates of $\hat D_{\mathrm{coh}}$ agree within $7\%$ on the Ni\~no~3.4 vs.\ SOI pair, confirming that the witness is a stable spectral observable across estimation methods. The phase-randomized surrogate distribution provides a nonparametric confidence assessment: rejection at $p\le 10^{-3}$ with $18.8\sigma$ separation leaves no ambiguity about statistical significance.

\begin{figure*}[t]
  \centering
  \includegraphics[width=\textwidth]{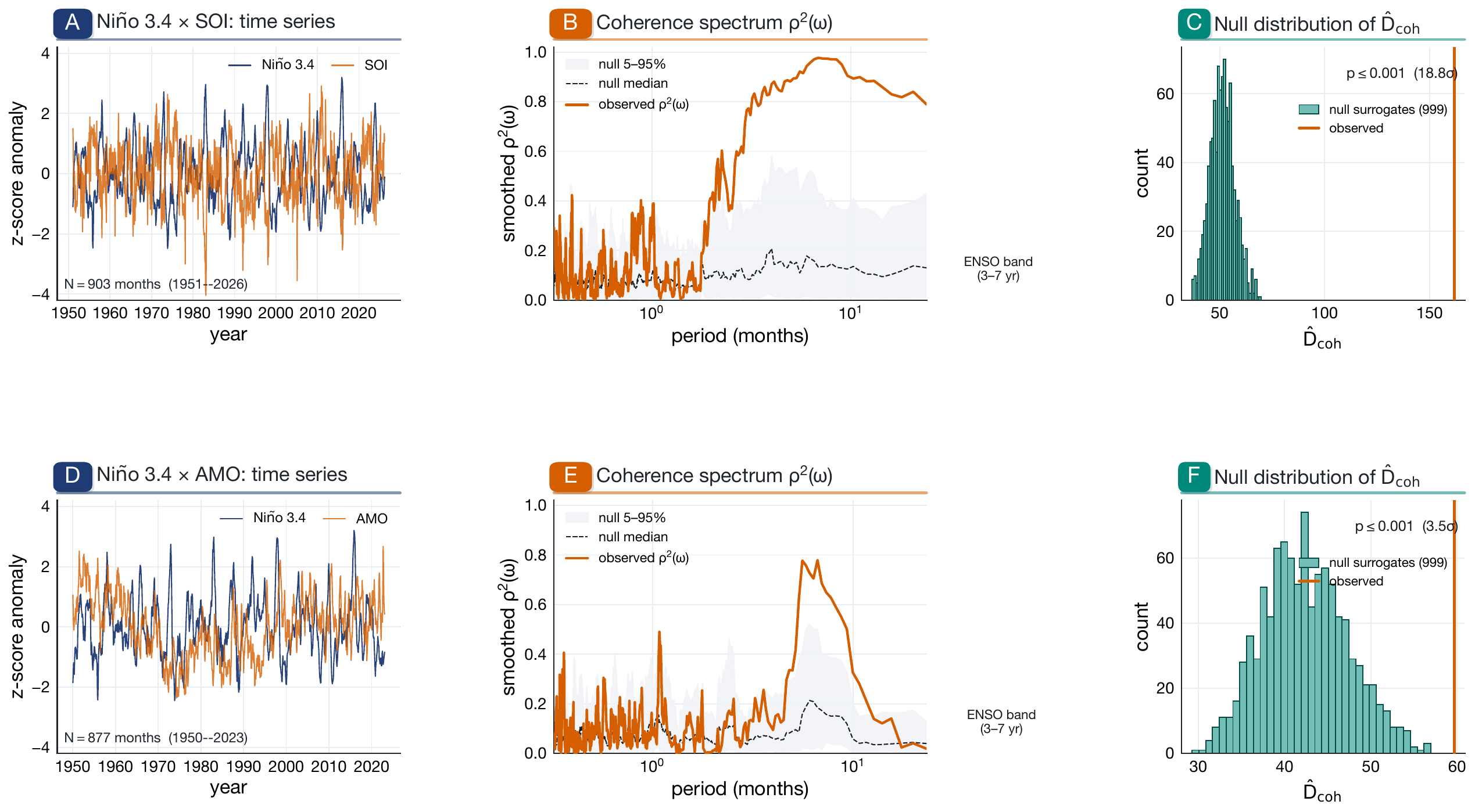}
  \caption{\textbf{Cross-spectral witness on real climate data.} \textbf{A,} Z-scored monthly time series of Niño~3.4 SST anomaly and the Southern Oscillation Index (1951--2026). \textbf{B,} Smoothed coherence $\rho^2(\omega)$ (red) versus the 5--95\% envelope of 999 phase-randomized surrogates (grey); ENSO band (3--7~yr) shaded. \textbf{C,} Null distribution of $\hat D_{\mathrm{coh}}$ (grey) with the observed value (red), $18.8\sigma$ above null. \textbf{D,E,F,} Same for Niño~3.4 vs.\ the Atlantic Multidecadal Oscillation (1950--2023), $3.5\sigma$. Panel E shows a structured decadal peak where Niño~3.4 and the Atlantic basin share maximum coherence, consistent with known inter-basin teleconnection but demonstrating correlation rather than direction of causation. No parametric hidden-driver model is fit in either case.}
  \label{fig:climate}
\end{figure*}

\section{Discussion}

We have shown that two simultaneously observed channels of a linear stochastic system carry a filter-free cross-spectral witness for the shared hidden drive, controlled solely by the hidden spectrum, loadings, and innovation scales. The witness is strictly positive wherever the hidden drive couples both channels---including at exact timescale coalescence where every single-channel statistic is provably blind---and under one-way coupling it identifies the total entropy production rate at leading order. The scalar Gaussian ceiling \cite{LucenteEtAl2022,CrisantiPuglisiVillamaina2012} is thus not a fundamental invisibility of the nonequilibrium physics but a projection artifact of the one-channel measurement geometry: the nonequilibrium is there, in the off-diagonal cross-spectral sector, waiting for a second probe to reveal it.

\emph{Experimental pathways.}---Dual-probe colloidal experiments with two same-size beads in a shared optical trap or active bath satisfy CSM by construction \cite{Fodor2016,BattleEtAl2016,Bechinger2016}: each bead's local response is its own harmonic restoring force, and the shared bath mode enters both channels at the same dynamical point. The witness is directly measurable from synchronized position recordings via standard spectral estimation, requiring no microscopic model fit. Multichannel neural recordings with a global latent input \cite{SekizawaItoOizumi2024} plausibly satisfy CSM for nearby same-region electrode pairs, while violating it across cortical layers with distinct synaptic time constants; the witness could discriminate these two regimes.

\emph{Relation to the Geweke decomposition.}---The coherence integral $\Dcoh$ is precisely Geweke's measure of contemporaneous linear dependence \cite{Geweke1982,Geweke1984}. Geweke, however, treats it as a descriptive statistic of two observed processes, with no latent-variable decomposition: in Geweke's framework $\rho^2(\omega)$ is simply a ratio of observable spectra, and there is no notion of a hidden sector whose fingerprint survives marginalization. Under CSM the same quantity acquires structural content: its spectral profile is intrinsic to the hidden sector, independent of the observed-channel dynamics, because the filter moduli cancel identically in the ratio. Neither the coalescence singularity removal nor the EPR bridge has an analog in Geweke's framework, since both require the CSM decomposition into channel filters and latent spectrum. Figure~\ref{fig:main-geometry}D confirms the filter-free property numerically: $\Dcoh$ is identical across AR(1), AR(2), and AR(3) observed filters to $10^{-14}$ precision.

\emph{Unification.}---The scalar darkness of Ref.~\cite{BiZhangCalhoun2026scalar} and the cross-spectral witness of the present Letter are two faces of the same geometry: the former is where the off-diagonal anomaly vanishes by projection ($|V|=1$), the latter is where it first turns on ($|V|=2$). The quartic onset at weak coupling, the coalescence singularity in the auto sector, and the filter-free cross sector are all manifestations of a single tangent-space structure: the reduced observer's parameter space can absorb the hidden perturbation along its own tangent directions but not across the off-diagonal gap. A full $N$-channel theory extending pairwise application of the witness with aggregation across dominant directions is a natural next step.

\begin{acknowledgments}
[Acknowledgments to be added.]
\end{acknowledgments}

All data and code needed to reproduce the results are available at \url{https://github.com/yudabitrends/cross-spectral-detectability} upon publication. Climate index data are publicly available from NOAA.

\bibliography{multivariate_followup_refs}

\clearpage
\onecolumngrid
\appendix
\setcounter{equation}{0}
\setcounter{figure}{0}
\setcounter{table}{0}
\renewcommand{\theequation}{\Alph{section}\arabic{equation}}
\makeatletter\@addtoreset{equation}{section}\makeatother
\renewcommand{\thefigure}{S\arabic{figure}}
\renewcommand{\thetable}{S\arabic{table}}

\begin{center}
{\Large\bfseries Supplemental Material}\\[0.35em]
{\large\bfseries Cross-Spectral Witness for Hidden Nonequilibrium Beyond the Scalar Ceiling}
\end{center}

\section*{Appendix Contents}

\begin{itemize}
  \item Appendix A: Scalar quartic-law foundation
  \item Appendix B: Exact multivariate spectrum and cross-spectrum lemmas
  \item Appendix C: Multivariate Whittle/KL decomposition and Hermitian log-det expansion
  \item Appendix D: Local diagonal branch and absorption boundary
  \item Appendix E: Cancellation identity and cross coefficient closed form
  \item Appendix F: Scalar-to-multivariate inheritance of the auto terms
  \item Appendix G: Boundary characterization for enriched nulls
  \item Appendix H: Symbolic verification and finite-sample records
  \item Appendix I: Robustness experiment protocols
  \item Appendix J: Related works and scope
  \item Appendix K: Benchmark-specific thermodynamic extension
  \item Appendix L: Finite-rank latent-sector verification (Route~1)
\end{itemize}

\section{Scalar Quartic-Law Foundation}

This appendix records the scalar results that the present multivariate analysis requires: the exact relative perturbation, the one-pole tangent geometry, the projection coefficients, the residual norm, and the resulting quartic law. The boundary law, pseudo-true shift formulas, enriched scalar nulls, and scalar Monte Carlo are omitted because the multivariate analysis does not depend on them; a full treatment appears in Ref.~\cite{BiZhangCalhoun2026scalar}.

\subsection{A1. Scalar model and exact spectrum}

Consider
\begin{equation}
X_{t+1}=aX_t+\lambda F_t+\epsilon_{t+1},
\qquad
F_{t+1}=bF_t+\eta_{t+1},
\label{eqA:model}
\end{equation}
with $|a|<1$, $|b|<1$, independent Gaussian noises, and only $X_t$ observed. The null one-pole spectrum is
\begin{equation}
S_0(\omega)=\frac{\sigma_\epsilon^2}{\Pfun_a(\omega)},
\qquad
\Pfun_c(\omega)=1+c^2-2c\cos\omega,
\label{eqA:null}
\end{equation}
whereas the exact observed spectrum is
\begin{equation}
S_{\mathrm{true}}(\omega)
=
\frac{\sigma_\epsilon^2}{\Pfun_a(\omega)}
+
\frac{\lambda^2\sigma_\eta^2}{\Pfun_a(\omega)\Pfun_b(\omega)}
=
S_0(\omega)\bigl(1+\lambda^2 h(\omega)\bigr),
\label{eqA:true}
\end{equation}
with
\begin{equation}
h(\omega)=\frac{\sigma_\eta^2}{\sigma_\epsilon^2\Pfun_b(\omega)}.
\label{eqA:h}
\end{equation}
This is the standard superposition of two linearly filtered white-noise sources. The main text reuses this structure channelwise for the inherited auto terms.

\subsection{A2. Tangent space of the one-pole manifold}

The relative one-pole null family is
\[
\frac{S_{\mathrm{null}}(\omega;\tilde a,\tilde\sigma^2)}{S_0(\omega)}
=
1+u\tilde e_1(\omega)+v\tilde e_2(\omega)+O(u^2+v^2+uv),
\]
where
\begin{equation}
\tilde e_1(\omega)=1,
\qquad
\tilde e_2(\omega)=\frac{2(\cos\omega-a)}{\Pfun_a(\omega)}.
\label{eqA:basis}
\end{equation}
Thus the scalar one-pole tangent space is $\mathcal T=\mathrm{span}\{\tilde e_1,\tilde e_2\}$.

The orthogonality follows from the Jensen identity
\begin{equation}
\frac{1}{2\pi}\int_{-\pi}^{\pi}\log \Pfun_a(\omega)\,d\omega=0
\qquad (|a|<1).
\label{eqA:jensen}
\end{equation}
Differentiating with respect to $a$ gives
\[
0
=
\frac{1}{2\pi}\int_{-\pi}^{\pi}\frac{2(a-\cos\omega)}{\Pfun_a(\omega)}\,d\omega
=
-\langle \tilde e_1,\tilde e_2\rangle,
\]
so
\begin{equation}
\langle \tilde e_1,\tilde e_2\rangle=0.
\label{eqA:orth}
\end{equation}
Moreover,
\begin{equation}
\|\tilde e_1\|_{L^2}^2=1,
\qquad
\|\tilde e_2\|_{L^2}^2=\frac{2}{1-a^2}.
\label{eqA:norms}
\end{equation}
To see the second identity directly, differentiate Eq.~\eqref{eqA:jensen} twice:
\[
0
=
\frac{1}{2\pi}\int_{-\pi}^{\pi}
\left[
\frac{2}{\Pfun_a(\omega)}
-\frac{4(a-\cos\omega)^2}{\Pfun_a(\omega)^2}
\right]d\omega.
\]
Since $\tilde e_2(\omega)=-2(a-\cos\omega)/\Pfun_a(\omega)$, this yields
\[
\|\tilde e_2\|_{L^2}^2
=
\frac{1}{2\pi}\int_{-\pi}^{\pi}\frac{2\,d\omega}{\Pfun_a(\omega)}
=
\frac{2}{1-a^2},
\]
where the last step uses the standard Poisson-kernel integral for $|a|<1$.
The main text uses this two-dimensional tangent geometry independently in each observed channel.

\subsection{A3. Projection coefficients and residual norm}

With the normalized $L^2$ inner product,
\[
\langle f,g\rangle
:=
\frac{1}{2\pi}\int_{-\pi}^{\pi}f(\omega)g(\omega)\,d\omega,
\]
the scalar perturbation coefficients are
\begin{equation}
\langle h,\tilde e_1\rangle
=
\frac{\sigma_\eta^2}{\sigma_\epsilon^2}\frac{1}{1-b^2},
\label{eqA:he1}
\end{equation}
\begin{equation}
\langle h,\tilde e_2\rangle
=
\frac{\sigma_\eta^2}{\sigma_\epsilon^2}
\frac{2b}{(1-ab)(1-b^2)}.
\label{eqA:he2}
\end{equation}
For completeness, set $z=e^{i\omega}$ so that $d\omega=dz/(iz)$. Then
\[
\langle h,\tilde e_2\rangle
=
\frac{\sigma_\eta^2}{\sigma_\epsilon^2}
\frac{1}{2\pi i}
\oint_{|z|=1}
\frac{z^2-2az+1}{(z-a)(1-az)(z-b)(1-bz)}\,dz.
\]
The poles inside the unit circle are at $z=a$ and $z=b$, with residues
\[
\operatorname*{Res}_{z=a}
=
\frac{1}{(a-b)(1-ab)},
\qquad
\operatorname*{Res}_{z=b}
=
\frac{1-2ab+b^2}{(b-a)(1-ab)(1-b^2)}.
\]
Their sum simplifies to $2b/[(1-ab)(1-b^2)]$, proving Eq.~\eqref{eqA:he2}. The squared norm of $h$ is
\begin{equation}
\|h\|_{L^2}^2
=
\frac{\sigma_\eta^4}{\sigma_\epsilon^4}
\frac{1+b^2}{(1-b^2)^3}.
\label{eqA:hnorm}
\end{equation}
Therefore the orthogonal residual $R=h-\Pi_{\mathcal T}h$ satisfies
\begin{equation}
\|R\|_{L^2}^2
=
\|h\|_{L^2}^2
-\frac{\langle h,\tilde e_1\rangle^2}{\|\tilde e_1\|^2}
-\frac{\langle h,\tilde e_2\rangle^2}{\|\tilde e_2\|^2}
=
\frac{\sigma_\eta^4}{\sigma_\epsilon^4}
\frac{2b^2(a-b)^2}{(1-b^2)^3(1-ab)^2}.
\label{eqA:Rnorm}
\end{equation}
This expression is exactly the source of the inherited auto coefficients in the main text.

\subsection{A4. Quartic law and zero set}

The scalar local Whittle/Kullback--Leibler minimum is
\begin{equation}
D_{\mathrm{loc}}^{\mathrm{scalar}}(\lambda)
=
\frac{\lambda^4}{4}\|R\|_{L^2}^2+O(\lambda^6)
=
C_{\mathrm{scalar}}\lambda^4+O(\lambda^6),
\label{eqA:quartic}
\end{equation}
with
\begin{equation}
C_{\mathrm{scalar}}
=
\frac{\sigma_\eta^4}{2\sigma_\epsilon^4}
\frac{b^2(a-b)^2}{(1-b^2)^3(1-ab)^2}.
\label{eqA:Cscalar}
\end{equation}
Hence
\begin{equation}
C_{\mathrm{scalar}}=0
\iff
(a=b)\ \text{or}\ (b=0).
\label{eqA:zeroset}
\end{equation}
The main text shows that the diagonal-null cross block contributes a strictly positive quartic coefficient even at exact coalescence $a_1=a_2=b$, thereby removing the single-channel detectability singularity.

\section{Exact Multivariate Spectrum and Cross-Spectrum Lemmas}

The compact main-text form in Eq.~\eqref{eq:true-matrix} expands componentwise to
\begin{equation}
\Strue(\omega)=
\begin{pmatrix}
\dfrac{\sigma_{\epsilon_1}^2}{\Pfun_{a_1}(\omega)}+\dfrac{\lambda^2u_1^2\sigma_\eta^2}{\Pfun_{a_1}(\omega)\Pfun_b(\omega)}
&
\dfrac{\lambda^2u_1u_2\sigma_\eta^2}{(1-a_1e^{-i\omega})(1-a_2e^{i\omega})\Pfun_b(\omega)}
\\[1.2em]
\dfrac{\lambda^2u_1u_2\sigma_\eta^2}{(1-a_1e^{i\omega})(1-a_2e^{-i\omega})\Pfun_b(\omega)}
&
\dfrac{\sigma_{\epsilon_2}^2}{\Pfun_{a_2}(\omega)}+\dfrac{\lambda^2u_2^2\sigma_\eta^2}{\Pfun_{a_2}(\omega)\Pfun_b(\omega)}
\end{pmatrix}.
\label{eqB:true-matrix-components}
\end{equation}
The basic cross-spectrum identity is
\begin{equation}
S_{12}^{\mathrm{true}}(\omega)
=
\frac{\lambda^2u_1u_2\sigma_\eta^2}
{(1-a_1e^{-i\omega})(1-a_2e^{i\omega})\Pfun_b(\omega)}.
\label{eqB:S12}
\end{equation}
Taking the modulus square gives
\begin{equation}
|S_{12}^{\mathrm{true}}(\omega)|^2
=
\frac{\lambda^4u_1^2u_2^2\sigma_\eta^4}
{\Pfun_{a_1}(\omega)\Pfun_{a_2}(\omega)\Pfun_b(\omega)^2}.
\label{eqB:crossmod}
\end{equation}
These are the spectral lemmas used by the main-text cross theorem.

\paragraph{CSM from finite-rank hidden sectors.} We now show that the channel-separable multiplicative form Eq.~\eqref{eq:csm-general} of the main text is not an ad hoc parametrisation but is forced by three standard physical settings once one assumes (a)~each observed channel $X^{(i)}$ responds to its own innovation $\epsilon_t^{(i)}$ through a causal stable scalar filter $H_i(z)$, (b)~a finite-dimensional latent process $Z_t\in\mathbb{R}^r$ drives both channels through a fixed loading matrix $L\in\mathbb{C}^{2\times r}$ at the same dynamical point as the innovations, and (c)~observed and latent innovations are mutually independent.

\begin{proposition}[Channel-separable multiplicative form from finite-rank hidden sectors]
\label{app-prop:csm-physical}
Under assumptions (a)--(c), the observed two-channel spectral density is
\begin{equation}
\Strue(\omega)=\mathbf H(\omega)\bigl[\Sigma_\epsilon+K_F(\omega)\bigr]\mathbf H(\omega)^{\ast},
\qquad
K_F(\omega)=L\,S_Z(\omega)\,L^{\ast},
\label{eqB:csm-derived}
\end{equation}
with $\mathbf H(\omega)=\diag(H_1(e^{-i\omega}),H_2(e^{-i\omega}))$, $\Sigma_\epsilon=\diag(\sigma_{\epsilon_1}^2,\sigma_{\epsilon_2}^2)$, $L\in\mathbb{C}^{2\times r}$ the loading matrix, and $S_Z(\omega)$ the $r\times r$ Hermitian positive-semidefinite spectral density of the latent process $Z_t$.
\end{proposition}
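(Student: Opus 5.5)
We first write the observed channels in time-domain form. Assumptions (a)--(b) say that the latent drive enters at the same dynamical point as the innovation. We formalise this as
\[
X_t^{(i)}=H_i(B)\bigl[\epsilon_t^{(i)}+(L Z)_t^{(i)}\bigr],\qquad i=1,2,
\]
where $B$ is the backshift operator and $(LZ)_t^{(i)}=\sum_k L_{ik}Z_t^{(k)}$. Equivalently, $X_t=\mathbf H(B)\,W_t$ with the driving vector $W_t:=\epsilon_t+LZ_t\in\mathbb C^2$.

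The proof then reduces to two standard facts about stationary linear processes.

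\emph{Transfer rule.} Because each $H_i$ is causal and stable, its impulse response is absolutely summable. Hence $X_t$ is a well-defined stationary process whose spectral density is
\[
\Strue(\omega)=\mathbf H(\omega)\,S_W(\omega)\,\mathbf H(\omega)^\ast.
\]
This follows by writing the autocovariance $\Gamma_X(k)=\sum_{m,n}h_m\Gamma_W(k-m+n)h_n^\ast$ and applying Fourier transform and Fubini. Absolute summability of the filter coefficients justifies the interchange. Since $\mathbf H$ is diagonal, there is no mixing of channels at this stage.

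\emph{Spectrum of the driving vector.} By (c), $\epsilon_t$ and $Z_t$ are independent, so the cross-covariances between $\epsilon_t$ and $LZ_t$ vanish and the spectra add. The innovations are white with diagonal covariance, contributing the constant $\Sigma_\epsilon$ (with the paper's normalisation, under which white noise of variance $\sigma^2$ has spectrum $\sigma^2$, as in Eq.~\eqref{eq:true-matrix}). Because the loading $L$ is a fixed, frequency-independent matrix, the transfer rule applied to the static filter $L$ gives the latent contribution $L S_Z L^\ast$. Therefore
\[
S_W(\omega)=\Sigma_\epsilon+K_F(\omega),\qquad K_F(\omega)=L\,S_Z(\omega)\,L^\ast.
\]
We also check that $K_F$ is Hermitian positive semidefinite: for any vector $v$, $v^\ast K_F v=(L^\ast v)^\ast S_Z (L^\ast v)\ge 0$. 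Substituting $S_W$ into the transfer rule yields Eq.~\eqref{eqB:csm-derived}.

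Finally, we note that the three physical settings named in the paper (Mori--Zwanzig reduction with a finite-rank retained sector, common-bath coupling, and a shared latent forcing bank) each instantiate (a)--(c). The benchmark Eqs.~\eqref{eq:model-x}--\eqref{eq:model-f} are a concrete check: they give $H_i=1/(1-a_iz)$ and $LZ_t=\lambda u F_{t-1}$. The one-step lag $F_{t-1}$ multiplies the spectrum by $|e^{-i\omega}|^2=1$, so it leaves $S_Z$ unchanged. This recovers Eq.~\eqref{eq:true-matrix}.

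\textbf{Main obstacle.} The only real subtlety is giving a precise meaning to "enters at the same dynamical point." Any filter applied to $Z$ that differs from $H_i$ will break the factorisation; this is exactly the delayed or band-shifted failure mode the paper mentions. A pure lag can be absorbed into $S_Z$ only when it is common to both channels. A channel-specific lag instead introduces a phase factor into $L$ that depends on $\omega$. I would handle this by stating the time-domain form above explicitly as the content of assumption (b). Once that is fixed, the remainder is the routine spectral calculus described above.
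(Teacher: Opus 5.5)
Your proposal is correct and follows essentially the same route as the paper's proof: you use the time-domain form $X_t^{(i)}=H_i\bigl[\epsilon_t^{(i)}+(LZ)_t^{(i)}\bigr]$ and invoke (c) to split the driving spectrum into $\Sigma_\epsilon+L S_Z L^{\ast}$, sandwiched by the diagonal filter $\mathbf H$, and you verify positive-semidefiniteness exactly as the paper does via $v^{\ast}K_F v=(L^{\ast}v)^{\ast}S_Z(L^{\ast}v)$. Your autocovariance/Fubini derivation of the transfer rule is a more careful version of the paper's heuristic $\mathbb E[\widetilde{\mathbf X}\widetilde{\mathbf X}^{\ast}]/T$ step, and your benchmark check with the common lag $F_{t-1}$ is consistent with Eq.~\eqref{eq:true-matrix}.
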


\emph{Proof.}---Under assumption~(a), each observed channel has the time-domain equation of motion $X_t^{(i)}=H_i(q)\bigl[\epsilon_t^{(i)}+(LZ)_t^{(i)}\bigr]$, where $q$ is the backward shift $qX_t=X_{t-1}$ and $(LZ)_t^{(i)}:=\sum_{k=1}^{r}L_{ik}Z_t^{(k)}$ collects the latent drive into channel $i$ through row $i$ of the loading matrix. Because the filter $H_i$ is the same whether it acts on the innovation or on the latent drive entering channel $i$, we can write the joint Fourier transform as
\[
\widetilde X^{(i)}(\omega)=H_i(e^{-i\omega})\Bigl[\widetilde\epsilon^{(i)}(\omega)+\sum_{k=1}^{r}L_{ik}\widetilde Z^{(k)}(\omega)\Bigr].
\]
The observed spectral matrix is then $\Strue(\omega)=\mathbb E[\widetilde{\mathbf X}(\omega)\widetilde{\mathbf X}(\omega)^{\ast}]/T$, where $\widetilde{\mathbf X}=\mathbf H\cdot(\widetilde\epsilon+L\widetilde Z)$ pointwise in $\omega$. Expanding, and using assumption~(c) that $\widetilde\epsilon$ and $\widetilde Z$ are uncorrelated,
\[
\Strue(\omega)=\mathbf H(\omega)\,\mathbb E\bigl[(\widetilde\epsilon+L\widetilde Z)(\widetilde\epsilon+L\widetilde Z)^{\ast}\bigr]\mathbf H(\omega)^{\ast}=\mathbf H(\omega)\bigl[\Sigma_\epsilon+L\,S_Z(\omega)\,L^{\ast}\bigr]\mathbf H(\omega)^{\ast}.
\]
Setting $K_F(\omega):=LS_Z(\omega)L^{\ast}$ yields Eq.~\eqref{eqB:csm-derived}. Hermitian positive-semidefiniteness of $K_F$ is inherited from $S_Z$: for any $v\in\mathbb{C}^2$, $v^{\ast}K_F v=(L^{\ast}v)^{\ast}S_Z(L^{\ast}v)\ge 0$ because $S_Z$ is PSD.\hfill$\square$

\paragraph{Physical instances of CSM.} Proposition~\ref{app-prop:csm-physical} covers three canonical settings from nonequilibrium statistical physics:
(i) \emph{Mori--Zwanzig reduction with a finite-rank retained memory kernel.}---Projecting a high-dimensional Markovian stochastic dynamics onto two observed degrees of freedom yields, for generic projectors, a finite-rank memory kernel supplied by the orthogonal subspace~\cite{Zwanzig1961,Mori1965,ChorinHaldKupferman2000,GivonKupfermanStuart2004}. The retained-subspace dynamics are scalar per channel (filter $H_i$), and the memory kernel enters as a rank-$r$ common drive with spectral density $S_Z$.
(ii) \emph{Common-bath open-system coupling.}---A system with two probes coupled to a shared $r$-mode bath through fixed coupling vectors realizes exactly the CSM form: each probe has its own local response $H_i$ and the bath contributes a common drive $LZ$ with $L$ the coupling matrix and $S_Z$ the bath spectral density.
(iii) \emph{Multichannel linear response to a shared latent forcing bank.}---In a two-channel linear-response setting, any shared forcing with a finite-rank spectral covariance enters both channels through fixed loadings and is subsequently shaped by each channel's local susceptibility $H_i$; this is the usual setup for climate reductions with shared slow modes~\cite{Hasselmann1976,FrankignoulHasselmann1977,PenlandSardeshmukh1995} and for multichannel neural recordings with a common latent input~\cite{SekizawaItoOizumi2024}.

In each case the structural assumption is that the latent drive enters at the same dynamical point as the innovations (so both are shaped by the same $H_i$), and the CSM form follows. Assumption (a)--(c) therefore covers a physically broad class while preserving the algebraic structure that enables the filter-free cancellation.

\section{Multivariate Whittle/KL Decomposition and Hermitian Log-Det Expansion}

The full matrix Whittle/Kullback--Leibler divergence between two spectral densities is
\begin{equation}
\DKL(\Stwo_1\|\Stwo_2)=\frac{1}{4\pi}\int_{-\pi}^{\pi}\bigl[\operatorname{tr}(\Stwo_2^{-1}\Stwo_1)-\log\det(\Stwo_2^{-1}\Stwo_1)-2\bigr]d\omega,
\label{eq:matrix-dkl}
\end{equation}
and the cross-spectral witness of the main text is obtained by evaluating this gap against the scalar-marginal reduction: $\Dcoh:=\DKL(\Strue\|\Sprod)$. The physical content is that, under the CSM hypothesis, this gap is filter-free.

\begin{theorem}[Filter-free cross-spectral witness under CSM]
\label{app-thm:exact-coh}
Let the observed two-channel spectral density $\Strue(\omega)$ have channel-separable multiplicative (CSM) structure, Eq.~\eqref{eq:csm-general}, with diagonal filters $\mathbf H(\omega)=\diag(H_1,H_2)$, innovation covariance $\Sigma_\epsilon=\diag(\sigma_{\epsilon_1}^2,\sigma_{\epsilon_2}^2)$, and finite-rank Hermitian positive-semidefinite latent-sector spectrum $K_F(\omega)=LS_Z(\omega)L^{\ast}$ with $L\in\mathbb{C}^{2\times r}$. Let $\Sprod(\omega)=\diag(S_{11}^{\mathrm{true}}(\omega),S_{22}^{\mathrm{true}}(\omega))$ denote the scalar-marginal reduction. Then $\Dcoh=\DKL(\Strue\|\Sprod)$ satisfies
\begin{equation}
\Dcoh=-\frac{1}{4\pi}\int_{-\pi}^{\pi}\log\bigl(1-\rho^2(\omega)\bigr)\,d\omega,
\qquad
\rho^2(\omega)=\frac{|K_{12}(\omega)|^2}{(\sigma_{\epsilon_1}^2+K_{11}(\omega))(\sigma_{\epsilon_2}^2+K_{22}(\omega))},
\label{eqC:filter-free}
\end{equation}
pointwise in $\omega$. All dependence on the observed-channel filters $H_1, H_2$ cancels identically, and $\Dcoh>0$ if and only if $K_{12}(\omega)\ne 0$ on a set of positive Lebesgue measure.
\end{theorem}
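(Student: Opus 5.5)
The plan is to evaluate the matrix Whittle divergence \eqref{eq:matrix-dkl} pointwise in $\omega$ with $\Stwo_1=\Strue$ and $\Stwo_2=\Sprod$, and then insert the CSM structure. Three steps, in order.

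\emph{Step 1 (trace term).} Since $\Sprod$ is diagonal with exactly the diagonal entries of $\Strue$, we have $\operatorname{tr}(\Sprod^{-1}\Strue)=S_{11}^{\mathrm{true}}/S_{11}^{\mathrm{true}}+S_{22}^{\mathrm{true}}/S_{22}^{\mathrm{true}}=2$. This cancels the $-2$ in the integrand, leaving only the log-determinant term.

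\emph{Step 2 (log-determinant).} For a $2\times2$ Hermitian matrix we have $\det\Strue=S_{11}S_{22}-|S_{12}|^2$, where I use $S_{21}=\overline{S_{12}}$. Hence
\[
\det(\Sprod^{-1}\Strue)=1-\frac{|S_{12}^{\mathrm{true}}|^2}{S_{11}^{\mathrm{true}}S_{22}^{\mathrm{true}}}.
\]
This gives the first formula, $\Dcoh=-(4\pi)^{-1}\int\log(1-\rho^2)\,d\omega$, with $\rho^2$ the observed coherence.

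\emph{Step 3 (filter cancellation).} Expanding Eq.~\eqref{eq:csm-general} entrywise with diagonal $\mathbf H$ gives
\[
S_{ii}=|H_i|^2(\sigma_{\epsilon_i}^2+K_{ii}),\qquad S_{12}=H_1\overline{H_2}\,K_{12}.
\]
Substituting into $\rho^2$, the factor $|H_1|^2|H_2|^2$ appears in both numerator and denominator and cancels. This requires $H_i(e^{-i\omega})\neq0$, which I will assume holds almost everywhere; a causal stable filter that is not identically zero vanishes only on a finite set, and a null set does not affect the integral. What remains is \eqref{eqC:filter-free}.

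Well-definedness and the positivity criterion come next.
\begin{itemize}
\item \emph{Bound on $\rho^2$.} Since $K_F$ is PSD, Cauchy--Schwarz gives $|K_{12}|^2\le K_{11}K_{22}$. Because $\sigma_{\epsilon_i}^2>0$, the denominator exceeds $K_{11}K_{22}$ strictly. Hence $0\le\rho^2<1$ pointwise and the logarithm is finite.
\item \emph{Sign of the integrand.} We have $-\log(1-\rho^2)\ge0$, with equality iff $\rho^2=0$, which happens iff $K_{12}(\omega)=0$.
\item \emph{Positivity iff.} The integral of a nonnegative measurable function is positive iff the function is positive on a set of positive measure. So $\Dcoh>0$ iff $K_{12}\neq0$ on a set of positive Lebesgue measure.
\end{itemize}

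The main obstacle is integrability: $\Dcoh$ must be finite for the ``$>0$'' statement to be meaningful rather than possibly infinite. Pointwise $\rho^2<1$ is not enough if $\rho^2\to1$ near some frequency, for instance when $K_{ii}$ blows up. I would bound $1-\rho^2$ from below by
\[
1-\rho^2\ \ge\ \frac{\sigma_{\epsilon_1}^2\sigma_{\epsilon_2}^2}{(\sigma_{\epsilon_1}^2+K_{11})(\sigma_{\epsilon_2}^2+K_{22})}.
\]
This gives
\[
-\log(1-\rho^2)\le\log\!\left(1+\frac{K_{11}}{\sigma_{\epsilon_1}^2}\right)+\log\!\left(1+\frac{K_{22}}{\sigma_{\epsilon_2}^2}\right),
\]
which is integrable whenever $\log(1+K_{ii})\in L^1$. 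That condition holds under the standing assumption that $S_Z$ is integrable (finite variance of $Z$), since $\log(1+x)\le x$. Rank finiteness enters only through $K_F=LS_ZL^\ast$ being a well-defined integrable PSD matrix function.
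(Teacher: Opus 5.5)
Your proposal is correct and follows the paper's proof essentially step for step: trace $\equiv 2$, $\det(\Sprod^{-1}\Strue)=1-\rho^2$ with the $|H_1|^2|H_2|^2$ factors cancelling, Cauchy--Schwarz on the PSD $K_F$ to get $\rho^2\in[0,1)$, the nonnegative-integrand criterion for positivity, and rank entering only through positive-semidefiniteness of $K_F$ (the paper's Step~6). Your additions go slightly beyond the paper: the bound $1-\rho^2\ge \sigma_{\epsilon_1}^2\sigma_{\epsilon_2}^2/[(\sigma_{\epsilon_1}^2+K_{11})(\sigma_{\epsilon_2}^2+K_{22})]$ actually proves $\Dcoh<\infty$, which the paper's ``pointwise finite'' remark does not, and a.e.\ nonvanishing of $H_i$ is what makes $\Sprod^{-1}$ exist. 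Two small caveats: for a non-rational stable filter the zero set of $H_i$ on the circle is only guaranteed to be null (since $\log|H_i|\in L^1$ for a nonzero $H^2$ function), not finite; and the positivity iff would hold even if $\Dcoh$ were $+\infty$, so finiteness matters for $\Dcoh$ having a meaningful value, not for the iff itself.
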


\emph{Proof.}---We proceed in six explicit steps.

\smallskip
\noindent\textit{Step 1 (CSM block entries).} Under Eq.~\eqref{eq:csm-general}, the components of $\Strue$ are
\[
S_{ij}^{\mathrm{true}}(\omega)=H_i(\omega)\,[\Sigma_\epsilon+K_F(\omega)]_{ij}\,\overline{H_j(\omega)},
\]
with $H_i(\omega):=H_i(e^{-i\omega})$. Because $\Sigma_\epsilon$ is diagonal, this reads
\[
S_{12}^{\mathrm{true}}(\omega)=H_1(\omega)\,\overline{H_2(\omega)}\,K_{12}(\omega),
\qquad
S_{ii}^{\mathrm{true}}(\omega)=|H_i(\omega)|^2\bigl(\sigma_{\epsilon_i}^2+K_{ii}(\omega)\bigr).
\]
The scalar-marginal reduction then factorizes as $\Sprod=|H|^2\,\diag(\sigma_{\epsilon_1}^2+K_{11},\sigma_{\epsilon_2}^2+K_{22})$, where $|H|^2:=\diag(|H_1|^2,|H_2|^2)$.

\smallskip
\noindent\textit{Step 2 (trace identity).} Forming $\mathbf A(\omega):=\Sprod^{-1}(\omega)\Strue(\omega)$, the diagonal entries of $\mathbf A$ are identically one: $A_{ii}(\omega)=S_{ii}^{\mathrm{true}}/S_{ii}^{\mathrm{true}}=1$, independent of $\omega$. Therefore
\[
\operatorname{tr}(\Sprod^{-1}\Strue)=A_{11}+A_{22}\equiv 2,
\qquad\forall\omega\in[-\pi,\pi].
\]

\smallskip
\noindent\textit{Step 3 (determinant identity).} The off-diagonal entries of $\mathbf A$ are
\[
A_{12}(\omega)=\frac{S_{12}^{\mathrm{true}}}{S_{11}^{\mathrm{true}}}=\frac{H_1\overline{H_2}K_{12}}{|H_1|^2(\sigma_{\epsilon_1}^2+K_{11})},
\qquad
A_{21}(\omega)=\frac{\overline{S_{12}^{\mathrm{true}}}}{S_{22}^{\mathrm{true}}}=\frac{\overline{H_1}H_2\overline{K_{12}}}{|H_2|^2(\sigma_{\epsilon_2}^2+K_{22})}.
\]
The product $A_{12}A_{21}$ cancels the $H_1\overline H_1$ and $H_2\overline H_2$ factors:
\[
A_{12}A_{21}=\frac{|H_1|^2|H_2|^2\,|K_{12}|^2}{|H_1|^2|H_2|^2\,(\sigma_{\epsilon_1}^2+K_{11})(\sigma_{\epsilon_2}^2+K_{22})}=\rho^2(\omega),
\]
which is real, non-negative, and independent of $H_1,H_2$. Hence
\[
\det(\Sprod^{-1}\Strue)=1-A_{12}A_{21}=1-\rho^2(\omega),
\qquad\forall\omega.
\]

\smallskip
\noindent\textit{Step 4 (substitution into matrix KL).} Substituting steps 2 and 3 into the matrix Whittle/KL integrand of Eq.~\eqref{eq:matrix-dkl},
\[
\DKL(\Strue\|\Sprod)=\frac{1}{4\pi}\int_{-\pi}^{\pi}\bigl[2-\log(1-\rho^2(\omega))-2\bigr]d\omega=-\frac{1}{4\pi}\int_{-\pi}^{\pi}\log\bigl(1-\rho^2(\omega)\bigr)d\omega,
\]
which is the first equality in Eq.~\eqref{eqC:filter-free}. The second equality---the filter-free closed form of $\rho^2$---was already established pointwise in step 3.

\smallskip
\noindent\textit{Step 5 (Cauchy--Schwarz positivity).} The latent-sector spectrum $K_F(\omega)$ is Hermitian positive-semidefinite by construction: $K_F(\omega)=L S_Z(\omega)L^{\ast}$ with $S_Z$ PSD. For any $v\in\mathbb{C}^2$, $v^{\ast}K_F v=(L^{\ast}v)^{\ast}S_Z(L^{\ast}v)\ge 0$. In particular, the $2\times 2$ operator Cauchy--Schwarz inequality gives
\[
|K_{12}(\omega)|^2\le K_{11}(\omega)K_{22}(\omega),
\qquad\forall\omega.
\]
Adding the strictly positive innovation scales $\sigma_{\epsilon_i}^2>0$ to the diagonal entries in the denominator strictly increases it, so
\[
\rho^2(\omega)=\frac{|K_{12}(\omega)|^2}{(\sigma_{\epsilon_1}^2+K_{11})(\sigma_{\epsilon_2}^2+K_{22})}\le\frac{K_{11}K_{22}}{(\sigma_{\epsilon_1}^2+K_{11})(\sigma_{\epsilon_2}^2+K_{22})}<1.
\]
Combined with $\rho^2\ge 0$, this yields $\rho^2(\omega)\in[0,1)$ for all $\omega$, and hence $-\log(1-\rho^2)$ is real, finite, and non-negative. The integral in Eq.~\eqref{eqC:filter-free} is therefore well-defined and non-negative; it vanishes iff $\rho^2(\omega)\equiv 0$ almost everywhere, i.e., iff $K_{12}(\omega)=0$ almost everywhere.

\smallskip
\noindent\textit{Step 6 (Cholesky uniformity in rank $r$).} The argument above used only the Hermitian PSD structure of $K_F$; no rank assumption was invoked beyond the existence of a factorization $K_F=LS_ZL^{\ast}$. By the Cholesky (or more generally, spectral) decomposition, any Hermitian PSD $2\times 2$ matrix $K_F(\omega)$ of rank $r\in\{0,1,2\}$ admits such a factorization with $L\in\mathbb{C}^{2\times r}$ and $r\times r$ Hermitian PSD $S_Z(\omega)$. The theorem therefore applies uniformly across every finite rank $r$; in particular both the rank-one single-mode benchmark ($r=1$) and the full rank-two latent sector ($r=2$) are covered by the same pointwise identities in steps 1--4. \hfill$\square$

\bigskip

\paragraph{Rank-one benchmark decomposition.} The normalized matrix Whittle/Kullback--Leibler divergence is Eq.~\eqref{eq:matrix-dkl}. With
\[
\mathbf A(\omega)=\Snull(\omega;\theta)^{-1}\Strue(\omega)
=
\begin{pmatrix}
1+\delta_1(\omega) & \alpha(\omega)\\
\beta(\omega) & 1+\delta_2(\omega)
\end{pmatrix},
\]
Since $\Strue$ is Hermitian ($S_{21}^{\mathrm{true}}=\overline{S_{12}^{\mathrm{true}}}$) and $\Snull$ is diagonal real, $\alpha\beta=A_{12}A_{21}$ gives
\begin{equation}
\alpha(\omega)\beta(\omega)
=
\frac{|S_{12}^{\mathrm{true}}(\omega)|^2}
{S_{11}^{0}(\omega)S_{22}^{0}(\omega)}
\ge 0.
\label{eqC:alphabeta}
\end{equation}
The order estimates are: $\delta_i(\omega)=O(\lambda^2)$ (each diagonal entry of $\Strue$ deviates from $\Snull$ by the hidden-driver contribution $\lambda^2 u_i^2\sigma_\eta^2/[\Pfun_{a_i}\Pfun_b]$), and $\alpha(\omega)=O(\lambda^2)$ (since $S_{12}^{\mathrm{true}}=O(\lambda^2)$ while $S_{11}^{\mathrm{null}}=O(1)$), so $\alpha\beta=O(\lambda^4)$. The exact $2\times 2$ determinant is $\det\mathbf A=(1+\delta_1)(1+\delta_2)-\alpha\beta$, giving
\[
\log\det\mathbf A
=
\log\bigl[(1+\delta_1)(1+\delta_2)-\alpha\beta\bigr]
=
\log(1+\delta_1)+\log(1+\delta_2)-\alpha\beta+O(\lambda^6),
\]
where the last step uses $\log(1-x/(1+\delta_1)(1+\delta_2))=-x/(1+\delta_1)(1+\delta_2)+O(x^2)$ with $x=\alpha\beta=O(\lambda^4)$ and $(1+\delta_i)^{-1}=1+O(\lambda^2)$, so the correction is $O(\lambda^8)$. This is the decomposition mechanism behind Eq.~\eqref{eq:decomposition-structural}.

\section{Local Diagonal Branch and Absorption Boundary}

\begin{proposition}[Small-coupling decomposition and diagonal-reparametrization invariance]
\label{app-prop:decomposition}
Consider the rank-one benchmark model~\eqref{eq:true-matrix} under the diagonal parametric reduction $\Snull(\omega;\theta)$ of Eq.~\eqref{eq:strict-diag-null}, with $\theta=(\tilde a_1,\tilde\sigma_1^2,\tilde a_2,\tilde\sigma_2^2)$ and $\theta_0=(a_1,\sigma_{\epsilon_1}^2,a_2,\sigma_{\epsilon_2}^2)$. Let $\theta^{\star}(\lambda):=\arg\min_{\theta}\DKL(\Strue\|\Snull(\cdot;\theta))$ denote the local minimizer branch. Then:
(i) The minimizer branch satisfies $\theta^{\star}(\lambda)=\theta_0+O(\lambda^2)$.
(ii) $\Dloc(\lambda):=\DKL(\Strue\|\Snull(\cdot;\theta^{\star}(\lambda)))=D_{\mathrm{auto},1}^{\min}(\lambda)+D_{\mathrm{auto},2}^{\min}(\lambda)+\Scross(\lambda)+O(\lambda^6)$.
(iii) The cross contribution is invariant under diagonal reparametrization at quartic order: $\Scross(\theta^{\star}(\lambda),\lambda)=\Scross(\theta_0,\lambda)+O(\lambda^6)$.
\end{proposition}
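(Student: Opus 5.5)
The plan is to work directly with the exact matrix divergence \eqref{eq:matrix-dkl}, writing $\mathbf A(\omega;\theta)=\Snull(\omega;\theta)^{-1}\Strue(\omega)$ with entries $1+\delta_i(\omega;\theta)$ on the diagonal and $\alpha,\beta$ off the diagonal, as in the rank-one benchmark paragraph of Appendix~C. Since $\Snull$ is diagonal, $\operatorname{tr}\mathbf A=2+\delta_1+\delta_2$ involves only the marginals. By the determinant expansion already recorded there, $\log\det\mathbf A=\log(1+\delta_1)+\log(1+\delta_2)-\alpha\beta+O(\lambda^8)$. Substituting, the integrand separates into a channel-1 scalar Whittle integrand $\delta_1-\log(1+\delta_1)$, a channel-2 one, and $+\alpha\beta$. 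This gives the exact identity
\begin{equation*}
\DKL(\Strue\|\Snull(\cdot;\theta))=D_{\mathrm{auto},1}(\tilde a_1,\tilde\sigma_1^2)+D_{\mathrm{auto},2}(\tilde a_2,\tilde\sigma_2^2)+\Scross(\theta,\lambda)+O(\lambda^8),
\end{equation*}
where $\Scross(\theta,\lambda):=(4\pi)^{-1}\int|S_{12}^{\mathrm{true}}|^2/(S_{11}^{\mathrm{null}}S_{22}^{\mathrm{null}})\,d\omega$ and each $D_{\mathrm{auto},i}$ is exactly the scalar Whittle divergence of Appendix~A for channel $i$. The remainder holds locally uniformly for $\theta$ in a neighbourhood of $\theta_0$.

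For (i), I would observe that the auto terms depend only on channel-$i$ parameters and are each $O(1)$-strongly convex near $\theta_0$ in the tangent coordinates of Appendix~A2. The Fisher/Whittle Hessian is $\tfrac12\mathrm{diag}(\|\tilde e_1\|^2,\|\tilde e_2\|^2)$ per channel, which is nondegenerate since $|a_i|<1$. Their gradients at $\theta_0$ are $O(\lambda^2)$, namely $-\tfrac12\lambda^2\langle u_i^2h_i,\tilde e_k\rangle$ with $h_i$ the channel-$i$ analogue of \eqref{eqA:h}. The cross term is $O(\lambda^4)$ with $O(\lambda^4)$ gradient. An implicit-function argument applied to the stationarity equations then gives $\theta^\star(\lambda)=\theta_0+O(\lambda^2)$, with leading shift equal to the scalar pseudo-true shift of each channel. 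The cross term perturbs $\theta^\star$ only at $O(\lambda^4)$.

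For (ii), I would evaluate the separated sum at $\theta^\star$. Because the cross term shifts the minimizer only by $O(\lambda^4)$ and the auto terms are stationary at their own minimizers, replacing $\theta^\star$ by the channelwise minimizers $\theta_i^{\min}$ changes each auto term by $O(\lambda^4)^2=O(\lambda^8)$. This yields $D_{\mathrm{auto},i}^{\min}$, which Appendix~A4 identifies as $u_i^4$-scaled $C_{\mathrm{scalar}}\lambda^4+O(\lambda^6)$.

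For (iii), $\Scross(\theta,\lambda)=\lambda^4 G(\theta)$ for a function $G$ smooth in $\theta$ near $\theta_0$, since $S_{ii}^{\mathrm{null}}$ is bounded away from zero there. A mean-value estimate then gives $\Scross(\theta^\star,\lambda)-\Scross(\theta_0,\lambda)=\lambda^4\,O(|\theta^\star-\theta_0|)=O(\lambda^6)$.

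The main obstacle I expect is rigor in the uniformity of the remainders and in the existence and uniqueness of the local branch. One must show that the $O(\lambda^8)$ determinant remainder and the Taylor expansions of the integrands hold uniformly in $\omega$ and in $\theta$ on a compact neighbourhood; I would get this from $|\tilde a_i|\le r<1$, which bounds $\Pfun_{\tilde a_i}$ above and below. One must also justify differentiating under the integral. I would handle the branch by applying the implicit function theorem to $\nabla_\theta\DKL=0$ in the variables $(\theta,\lambda^2)$, using the block-diagonal nondegenerate Hessian at $(\theta_0,0)$.
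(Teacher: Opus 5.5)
Your proposal is correct and follows the paper's proof essentially step for step. You use the same determinant expansion that splits the divergence into the two channelwise scalar Whittle terms plus $(4\pi)^{-1}\int\alpha\beta\,d\omega$, the same implicit-function argument ($O(\lambda^2)$ gradient against a nondegenerate Whittle Fisher Hessian) for (i), and the same $O(\lambda^4)$-gradient times $O(\lambda^2)$-shift estimate for (iii); your handling of the $\theta$-dependence of the cross term in (ii) is slightly more careful than the paper's.

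One correction, inherited from the paper's Appendix~C: the determinant remainder is $O(\lambda^6)$, not $O(\lambda^8)$. The reason is that $\det\mathbf A=(1+\delta_1)(1+\delta_2)(1-\rho^2)$ holds exactly. Your remainder therefore equals $\Dcoh-\Scross(\theta,\lambda)=-(4\pi)^{-1}\int\alpha\beta\,(\delta_1+\delta_2)\,d\omega+O(\lambda^8)$ with $\delta_i=O(\lambda^2)$, which is generically nonzero at order $\lambda^6$. This is harmless for the stated $O(\lambda^6)$ claims, since the remainder's $\theta$-gradient is still $O(\lambda^4)$. The exact split $\DKL=D_{\mathrm{auto},1}+D_{\mathrm{auto},2}+\Dcoh$, with $\Dcoh$ independent of $\theta$, even shows that $\theta^\star$ is exactly the pair of channelwise scalar minimizers.
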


\emph{Proof.}---We establish the three claims in turn.

\smallskip
\noindent\textit{Claim (i): $\theta^{\star}(\lambda)=\theta_0+O(\lambda^2)$.} The minimizer is defined by the score equation $\nabla_\theta\DKL(\Strue\|\Snull(\cdot;\theta))|_{\theta=\theta^\star}=0$. Expanding $\Strue=\Snull(\cdot;\theta_0)+\lambda^2 \Delta(\omega)+O(\lambda^4)$, where $\Delta(\omega)$ encodes the hidden-driver perturbation from Eq.~\eqref{eq:true-matrix}, the Whittle KL divergence admits the Taylor expansion
\[
\DKL(\Strue\|\Snull(\cdot;\theta))=\DKL(\Strue\|\Snull(\cdot;\theta_0))+\tfrac{1}{2}(\theta-\theta_0)^\top\mathcal H(\theta_0)(\theta-\theta_0)+g(\theta_0)^\top(\theta-\theta_0)+O(\|\theta-\theta_0\|^3),
\]
with gradient $g(\theta_0):=\nabla_\theta\DKL|_{\theta_0}$ and Hessian $\mathcal H(\theta_0):=\nabla^2_\theta\DKL|_{\theta_0}$. At $\theta=\theta_0$, $\Snull(\cdot;\theta_0)$ coincides with the zeroth-order part of $\Strue$, so $g(\theta_0)$ is first-order in the perturbation: $g(\theta_0)=O(\lambda^2)$ because the perturbation enters at order $\lambda^2$. The Hessian $\mathcal H(\theta_0)$ is the $O(1)$ Whittle Fisher information of the diagonal null family at $\theta_0$, which is nonsingular under the standard identifiability condition $|a_i|<1$. By the implicit function theorem,
\[
\theta^{\star}(\lambda)-\theta_0=-\mathcal H(\theta_0)^{-1}g(\theta_0)+O(\lambda^4)=O(\lambda^2),
\]
which proves claim (i).

\smallskip
\noindent\textit{Claim (ii): Decomposition into auto and cross contributions.} Substituting the log-determinant expansion derived in the preceding section of this appendix,
\[
\log\det\mathbf A(\omega)=\log(1+\delta_1)+\log(1+\delta_2)-\alpha\beta+O(\lambda^8),
\]
into the matrix-KL integrand, the first two terms give the channelwise auto contributions and the third gives the cross term:
\[
\DKL(\Strue\|\Snull(\cdot;\theta))=\tfrac{1}{4\pi}\int\bigl[(\operatorname{tr}(\mathbf A)-2)-\log(1+\delta_1)-\log(1+\delta_2)+\alpha\beta\bigr]d\omega+O(\lambda^8).
\]
Minimizing over $\theta$ acts only on the diagonal parameters and produces $D_{\mathrm{auto},1}^{\min}(\lambda)$ and $D_{\mathrm{auto},2}^{\min}(\lambda)$, one from each channel. The cross term $\Scross(\lambda):=(4\pi)^{-1}\int\alpha\beta\,d\omega$ involves only the off-diagonal $S_{12}^{\mathrm{true}}$, which has no dependence on $\theta$ through the diagonal null; $\theta$ enters $\alpha\beta$ only through the denominators $S_{ii}^0$. This proves claim (ii) with error $O(\lambda^6)$ after discarding the $O(\lambda^8)$ log-expansion remainder.

\smallskip
\noindent\textit{Claim (iii): Diagonal-reparametrization invariance at quartic order.} We estimate the $\theta$-sensitivity of $\Scross$. Because $\Scross=(4\pi)^{-1}\int\alpha\beta\,d\omega$ with
\[
\alpha\beta=\frac{|S_{12}^{\mathrm{true}}(\omega)|^2}{S_{11}^0(\omega;\theta)S_{22}^0(\omega;\theta)},
\]
and $|S_{12}^{\mathrm{true}}|^2=O(\lambda^4)$ while $S_{ii}^0(\omega;\theta)=O(1)$ with $\theta$-derivative $O(1)$ by smoothness of the AR(1) parametrization, the gradient $\nabla_\theta\Scross$ has a numerator of order $\lambda^4$ and a denominator of order one, so $\nabla_\theta\Scross(\theta_0,\lambda)=O(\lambda^4)$. Combined with claim (i), the first-order Taylor correction is
\begin{align*}
\Scross(\theta^{\star}(\lambda),\lambda)-\Scross(\theta_0,\lambda)
&=(\theta^{\star}-\theta_0)\cdot\nabla_\theta\Scross(\theta_0,\lambda)+O(\|\theta^{\star}-\theta_0\|^2\|\nabla^2_\theta\Scross\|)\\
&=O(\lambda^2)\cdot O(\lambda^4)+O(\lambda^4)\cdot O(\lambda^4)=O(\lambda^6),
\end{align*}
which establishes the claim: the cross contribution evaluated on the minimizer branch agrees with its value at $\theta_0$ up to $O(\lambda^6)$.\hfill$\square$

\paragraph{Interpretation.} The absorption boundary proved above is what drives the physical hierarchy stated in the main text: diagonal reparametrization can attenuate each channelwise auto contribution (by as much as $D_{\mathrm{auto},i}^{\min}$ at its scalar-null minimum) but cannot absorb the cross contribution without enlarging the reduced family itself. The quartic cross coefficient $C_{\mathrm{cross}}$ is therefore intrinsic to the hidden geometry, not to the fit.

\section{Cancellation Identity and Cross Coefficient Closed Form}

\begin{lemma}[Observed-filter cancellation at the null point]
\label{app-lem:cancellation}
Suppose the rank-one benchmark assumptions hold: (i) channel $i$ is governed by a causal stable scalar filter $H_i(z)$ with innovation variance $\sigma_{\epsilon_i}^2$, so $S_{ii}^0(\omega)=\sigma_{\epsilon_i}^2|H_i(e^{-i\omega})|^2$; (ii) a single shared hidden mode with spectral density $S_F(\omega)\in L^2[-\pi,\pi]$ enters each channel at the same dynamical point as the innovation through loadings $u_1,u_2$ with $u_1^2+u_2^2=1$; (iii) observed and hidden innovations are mutually independent. Then, pointwise in $\omega$,
\begin{equation}
\frac{|S_{12}^{\mathrm{true}}(\omega)|^2}{S_{11}^0(\omega)S_{22}^0(\omega)}=\frac{\lambda^4 u_1^2 u_2^2}{\sigma_{\epsilon_1}^2\sigma_{\epsilon_2}^2}\,S_F(\omega)^2,
\label{eqE:cancellation}
\end{equation}
independently of the observed-channel filters $H_1,H_2$.
\end{lemma}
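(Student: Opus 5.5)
The identity follows from writing the cross-spectrum through the CSM representation and letting the filter moduli cancel against the null marginals. I would obtain it as the rank-one specialization of Proposition~\ref{app-prop:csm-physical} rather than recomputing from the AR(1) formulas of Appendix~B. Under assumptions (i)--(iii), set $r=1$, $L=\lambda u$ with $u=(u_1,u_2)^\top$ real, and $S_Z=S_F$. Proposition~\ref{app-prop:csm-physical} then gives $\Strue=\mathbf H[\Sigma_\epsilon+\lambda^2S_F\,uu^\top]\mathbf H^\ast$. Reading off the off-diagonal entry, as in Step~1 of the proof of Theorem~\ref{app-thm:exact-coh}, yields
\[
S_{12}^{\mathrm{true}}(\omega)=\lambda^2u_1u_2\,S_F(\omega)\,H_1(e^{-i\omega})\overline{H_2(e^{-i\omega})}.
\]
Because $\Sigma_\epsilon$ is diagonal, the innovations contribute nothing off the diagonal; this uses assumption (iii).

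Next I take the modulus squared. Since $S_F(\omega)\ge 0$ is real and the loadings are real, this gives $|S_{12}^{\mathrm{true}}|^2=\lambda^4u_1^2u_2^2S_F^2\,|H_1|^2|H_2|^2$. I then divide by the null-point marginals $S_{ii}^0=\sigma_{\epsilon_i}^2|H_i|^2$ from assumption (i). The factors $|H_1|^2|H_2|^2$ cancel pointwise, which leaves exactly Eq.~\eqref{eqE:cancellation}. As a sanity check, the AR(1) case $H_i=(1-a_ie^{-i\omega})^{-1}$, $S_F=\sigma_\eta^2/\Pfun_b$ reproduces Eq.~\eqref{eqB:crossmod} divided by $\sigma_{\epsilon_1}^2\sigma_{\epsilon_2}^2/(\Pfun_{a_1}\Pfun_{a_2})$.

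The only step needing care is the division, which requires $H_i(e^{-i\omega})\neq 0$. I would handle it by stating the identity for almost every $\omega$. A causal stable filter that is not identically zero has boundary values vanishing only on a null set; for a rational filter this set is finite, and for the AR-type filters used in the paper there are no zeros at all. If one wants literal pointwise validity, it holds wherever $S_{ii}^0>0$, and it extends by continuity at isolated zeros when the filters are rational. The hypothesis $S_F\in L^2$ is not needed for the pointwise identity. It only guarantees that the right-hand side is integrable, so that $\mathcal I_F$ in Eq.~\eqref{eq:cross-general} is finite; I would note this for the subsequent cross-coefficient computation.
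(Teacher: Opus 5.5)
Your proposal is correct and follows essentially the same route as the paper. You write $S_{12}^{\mathrm{true}}=\lambda^2u_1u_2\,H_1\overline{H_2}\,S_F$ from the shared-filter structure, take the modulus squared, and cancel $|H_1|^2|H_2|^2$ against $S_{11}^0S_{22}^0=\sigma_{\epsilon_1}^2\sigma_{\epsilon_2}^2|H_1|^2|H_2|^2$; deriving the cross-spectrum as the rank-one case of the CSM proposition, and noting that the division needs $H_i(e^{-i\omega})\neq 0$ (true off a null set), are harmless refinements that the paper leaves implicit.
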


\emph{Proof.}---Under assumption (ii), the cross-spectrum of the observed process is $S_{12}^{\mathrm{true}}(\omega)=\lambda^2 u_1 u_2\,H_1(e^{-i\omega})\overline{H_2(e^{-i\omega})}\,S_F(\omega)$, because both observed-channel innovations and the shared hidden mode are filtered by the same $H_i$. Taking the modulus square,
\[
|S_{12}^{\mathrm{true}}(\omega)|^2=\lambda^4 u_1^2 u_2^2\,|H_1(e^{-i\omega})|^2\,|H_2(e^{-i\omega})|^2\,S_F(\omega)^2.
\]
The diagonal null spectrum from assumption (i) gives
\[
S_{11}^0(\omega)\,S_{22}^0(\omega)=\sigma_{\epsilon_1}^2\sigma_{\epsilon_2}^2\,|H_1(e^{-i\omega})|^2\,|H_2(e^{-i\omega})|^2.
\]
Dividing the two expressions, the $|H_1|^2|H_2|^2$ factors appear identically in numerator and denominator and cancel pointwise in $\omega$, leaving Eq.~\eqref{eqE:cancellation}.\hfill$\square$

\begin{theorem}[Quartic cross law with closed-form $\mathcal I_F$]
\label{app-thm:cross}
Under the assumptions of Lemma~\ref{app-lem:cancellation}, and combined with the small-coupling decomposition~\eqref{eq:decomposition-structural}, the leading cross contribution is
\begin{equation}
\Scross(\lambda)=\frac{\lambda^4 u_1^2 u_2^2}{\sigma_{\epsilon_1}^2\sigma_{\epsilon_2}^2}\,\mathcal I_F+O(\lambda^6),\qquad\mathcal I_F:=\frac{1}{4\pi}\int_{-\pi}^{\pi}S_F(\omega)^2\,d\omega,
\label{eqE:cross-general}
\end{equation}
independent of the observed-channel dynamics. For the AR(1) hidden mode with $S_F(\omega)=\sigma_\eta^2/\Pfun_b(\omega)$ and $|b|<1$,
\begin{equation}
\mathcal I_F=\frac{\sigma_\eta^4(1+b^2)}{2(1-b^2)^3},
\qquad
C_{\mathrm{cross}}=\frac{u_1^2 u_2^2\,\sigma_\eta^4(1+b^2)}{2\sigma_{\epsilon_1}^2\sigma_{\epsilon_2}^2(1-b^2)^3}.
\label{eqE:IF-closed-form}
\end{equation}
\end{theorem}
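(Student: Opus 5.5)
The proof has two parts: an abstract reduction of $\Scross$ to a spectral integral via earlier results, and an explicit evaluation of that integral for the AR(1) hidden mode.

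\emph{Abstract reduction.} By Proposition~\ref{app-prop:decomposition}(ii)--(iii), the cross contribution on the minimizer branch is $\Scross(\lambda)=(4\pi)^{-1}\int_{-\pi}^{\pi}\alpha\beta\,d\omega$ evaluated at the null point $\theta_0$, up to an $O(\lambda^6)$ error. Here $\alpha\beta=|S_{12}^{\mathrm{true}}|^2/(S_{11}^0S_{22}^0)$ by Eq.~\eqref{eqC:alphabeta}. Lemma~\ref{app-lem:cancellation} then replaces this integrand pointwise by $\lambda^4u_1^2u_2^2S_F(\omega)^2/(\sigma_{\epsilon_1}^2\sigma_{\epsilon_2}^2)$. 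Integrating gives Eq.~\eqref{eqE:cross-general} directly, with $\mathcal I_F$ finite because $S_F\in L^2$. Independence from the observed dynamics is immediate, since the integrand after cancellation contains no $H_i$. I will also note that the $O(\lambda^6)$ remainder is uniform: it comes only from the branch shift $\theta^\star-\theta_0=O(\lambda^2)$ multiplying an $O(\lambda^4)$ gradient. This requires nonsingularity of the diagonal-null Fisher information, which is already assumed in Proposition~\ref{app-prop:decomposition}.

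\emph{Evaluation of $\mathcal I_F$.} Here $\mathcal I_F=\tfrac{\sigma_\eta^4}{2}\cdot\frac{1}{2\pi}\int_{-\pi}^{\pi}\Pfun_b(\omega)^{-2}\,d\omega$. I will evaluate this integral by reusing the Jensen-identity technique of Appendix~A2. Define $J_1(b):=\frac{1}{2\pi}\int \Pfun_b^{-1}\,d\omega=1/(1-b^2)$, the Poisson-kernel integral already quoted there. Then $\Pfun_b^{-2}$ can be reached either by differentiation in a parameter or by residues. The cleanest route is to write $\Pfun_b(\omega)=(1-bz)(1-b/z)$ with $z=e^{i\omega}$. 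This gives
\[
\frac{1}{2\pi}\int_{-\pi}^{\pi}\Pfun_b^{-2}\,d\omega=\frac{1}{2\pi i}\oint_{|z|=1}\frac{z\,dz}{(1-bz)^2(z-b)^2},
\]
which has a single double pole inside the unit circle, at $z=b$. The residue is the derivative of $z/(1-bz)^2$ at $z=b$, namely $(1+b^2)/(1-b^2)^3$. Alternatively, this matches $\|h\|^2$ in Eq.~\eqref{eqA:hnorm} with $\sigma_\eta=\sigma_\epsilon$, so I can cite that result as a consistency check. Multiplying by $\sigma_\eta^4/2$ yields $\mathcal I_F=\sigma_\eta^4(1+b^2)/[2(1-b^2)^3]$. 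Then $C_{\mathrm{cross}}$ follows by multiplying by $u_1^2u_2^2/(\sigma_{\epsilon_1}^2\sigma_{\epsilon_2}^2)$.

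\emph{Main obstacle.} The only delicate point is the first part: justifying that the $\theta$-dependence of the cross term, together with the $O(\lambda^8)$ log-det remainder, contributes only at $O(\lambda^6)$ for general causal stable filters $H_i$ rather than just AR(1). I would handle this by requiring $S_{ii}^0$ to be bounded away from zero and smoothly parametrized, so the gradient estimate in Proposition~\ref{app-prop:decomposition}(iii) carries over verbatim. The residue computation itself is routine.
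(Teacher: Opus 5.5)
Your proposal is correct and follows essentially the same route as the paper. It reduces $\Scross$ to $(4\pi)^{-1}\int\alpha\beta\,d\omega$ via the small-coupling decomposition, applies the cancellation lemma pointwise, and evaluates $\mathcal I_F$ from the residue of $z/[(z-b)^2(1-bz)^2]$ at the double pole $z=b$, which gives $(1+b^2)/(1-b^2)^3$. You also make explicit two points the paper's proof leaves implicit: the use of Proposition~\ref{app-prop:decomposition}(iii) to pass from $\theta^\star(\lambda)$ to $\theta_0$, where Lemma~\ref{app-lem:cancellation} applies, and the caveat that the decomposition is proved only for the AR(1) null. As a small sharpening of that caveat, general filters need a smooth, nondegenerate parametric null family that actually contains the $\lambda=0$ spectra $\sigma_{\epsilon_i}^2|H_i|^2$.
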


\emph{Proof.}---Combining Eq.~\eqref{eqE:cancellation} with the small-coupling decomposition~\eqref{eq:decomposition-structural} gives
\[
\Scross(\lambda)=\frac{1}{4\pi}\int_{-\pi}^{\pi}\alpha(\omega)\beta(\omega)\,d\omega+O(\lambda^6)=\frac{\lambda^4 u_1^2 u_2^2}{\sigma_{\epsilon_1}^2\sigma_{\epsilon_2}^2}\cdot\frac{1}{4\pi}\int_{-\pi}^{\pi}S_F(\omega)^2\,d\omega+O(\lambda^6),
\]
which is Eq.~\eqref{eqE:cross-general}. It remains to evaluate $\mathcal I_F$ for the AR(1) hidden mode.

\smallskip
\noindent\textit{Contour evaluation of $\mathcal I_F$.} For the AR(1) mode, $\Pfun_b(\omega)=|1-be^{-i\omega}|^2=(1-be^{-i\omega})(1-be^{i\omega})$ and $S_F(\omega)=\sigma_\eta^2/\Pfun_b(\omega)$, so
\[
\mathcal I_F=\frac{\sigma_\eta^4}{4\pi}\int_{-\pi}^{\pi}\frac{d\omega}{\Pfun_b(\omega)^2}.
\]
Substitute $z=e^{i\omega}$, $dz=iz\,d\omega$, so that $d\omega=dz/(iz)$, $e^{-i\omega}=z^{-1}$, $e^{i\omega}=z$. The integral becomes
\[
\int_{-\pi}^{\pi}\frac{d\omega}{\Pfun_b(\omega)^2}=\oint_{|z|=1}\frac{1}{(1-bz^{-1})^2(1-bz)^2}\cdot\frac{dz}{iz}.
\]
Multiplying numerator and denominator by $z^2$ to clear $z^{-1}$,
\[
=\oint_{|z|=1}\frac{z^2}{(z-b)^2(1-bz)^2}\cdot\frac{dz}{iz}=\frac{1}{i}\oint_{|z|=1}\frac{z\,dz}{(z-b)^2(1-bz)^2}.
\]
For $|b|<1$, the poles are at $z=b$ (inside the unit circle, double pole) and $z=1/b$ (outside, double pole). Only $z=b$ contributes. The residue at the double pole $z=b$ is
\[
\operatorname{Res}_{z=b}\frac{z}{(z-b)^2(1-bz)^2}=\lim_{z\to b}\frac{d}{dz}\left[\frac{z}{(1-bz)^2}\right].
\]
Computing the derivative,
\[
\frac{d}{dz}\left[\frac{z}{(1-bz)^2}\right]=\frac{(1-bz)^2+2bz(1-bz)}{(1-bz)^4}=\frac{(1-bz)+2bz}{(1-bz)^3}=\frac{1+bz}{(1-bz)^3}.
\]
Evaluating at $z=b$,
\[
\left.\frac{1+bz}{(1-bz)^3}\right|_{z=b}=\frac{1+b^2}{(1-b^2)^3}.
\]
By the residue theorem, the contour integral equals $2\pi i$ times this residue:
\[
\frac{1}{i}\oint_{|z|=1}\frac{z\,dz}{(z-b)^2(1-bz)^2}=\frac{1}{i}\cdot 2\pi i\cdot\frac{1+b^2}{(1-b^2)^3}=\frac{2\pi(1+b^2)}{(1-b^2)^3}.
\]
Therefore
\[
\mathcal I_F=\frac{\sigma_\eta^4}{4\pi}\cdot\frac{2\pi(1+b^2)}{(1-b^2)^3}=\frac{\sigma_\eta^4(1+b^2)}{2(1-b^2)^3},
\]
which is the first equality in Eq.~\eqref{eqE:IF-closed-form}. Substituting this into Eq.~\eqref{eqE:cross-general} and reading off the $\lambda^4$ coefficient gives $C_{\mathrm{cross}}=u_1^2 u_2^2\,\sigma_\eta^4(1+b^2)/[2\sigma_{\epsilon_1}^2\sigma_{\epsilon_2}^2(1-b^2)^3]$.\hfill$\square$

\section{Scalar-to-Multivariate Inheritance of the Auto Terms}

Each observed channel inherits the scalar quartic law with the replacements
\[
a\mapsto a_i,
\qquad
\lambda\mapsto \lambda u_i,
\qquad
\sigma_\epsilon^2\mapsto \sigma_{\epsilon_i}^2.
\]
This gives
\[
C_{\mathrm{auto}}^{(i)}
=
\frac{u_i^4\sigma_\eta^4}{2\sigma_{\epsilon_i}^4}
\frac{b^2(a_i-b)^2}{(1-b^2)^3(1-a_ib)^2},
\]
and therefore the complete diagonal-null quartic coefficient
\[
C_{\mathrm{tot}}
=
C_{\mathrm{auto}}^{(1)}+C_{\mathrm{auto}}^{(2)}+C_{\mathrm{cross}}.
\]

\begin{corollary}[Coalescence singularity removal]
\label{app-cor:coalescence}
In the AR(1) rank-one benchmark at exact timescale coalescence $a_1=a_2=b$ with $u_1 u_2\ne 0$ and $|b|<1$, both channelwise auto coefficients vanish:
\[
C_{\mathrm{auto}}^{(i)}(a_i=b)=0,\qquad i=1,2.
\]
The full local residual at quartic order reduces to the cross contribution alone,
\begin{equation}
\Dloc(\lambda)=C_{\mathrm{cross}}\lambda^4+O(\lambda^6),
\label{eqF:coalescence-positive}
\end{equation}
with $C_{\mathrm{cross}}=u_1^2 u_2^2\,\sigma_\eta^4(1+b^2)/[2\sigma_{\epsilon_1}^2\sigma_{\epsilon_2}^2(1-b^2)^3]$ from Theorem~\ref{app-thm:cross}. This residual is strictly positive.
\end{corollary}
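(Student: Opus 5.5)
The plan is to assemble the corollary directly from three earlier results: the channelwise auto coefficients inherited in Appendix~F, the small-coupling decomposition of Proposition~\ref{app-prop:decomposition}, and the closed-form cross law of Theorem~\ref{app-thm:cross}. No new analysis is needed beyond checking that the hypotheses of each result hold at $a_1=a_2=b$.

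\emph{Step 1: the auto coefficients vanish.} Substitute $a_i=b$ into
\[
C_{\mathrm{auto}}^{(i)}=\frac{u_i^4\sigma_\eta^4}{2\sigma_{\epsilon_i}^4}\,\frac{b^2(a_i-b)^2}{(1-b^2)^3(1-a_ib)^2}.
\]
The factor $(a_i-b)^2$ kills the numerator. The denominator $(1-b^2)^3(1-b^2)^2=(1-b^2)^5$ is nonzero because $|b|<1$, so each coefficient is exactly zero. This mirrors the scalar zero set, Eq.~\eqref{eqA:zeroset}, through the substitutions $a\mapsto a_i$, $\lambda\mapsto\lambda u_i$.

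\emph{Step 2: the residual reduces to the cross term.} Proposition~\ref{app-prop:decomposition}(ii) gives
\[
\Dloc=D_{\mathrm{auto},1}^{\min}+D_{\mathrm{auto},2}^{\min}+\Scross+O(\lambda^6).
\]
By Appendix~A4 applied channelwise, $D_{\mathrm{auto},i}^{\min}=C_{\mathrm{auto}}^{(i)}\lambda^4+O(\lambda^6)$, which is $O(\lambda^6)$ by Step~1. Theorem~\ref{app-thm:cross}, combined with the reparametrization invariance in Proposition~\ref{app-prop:decomposition}(iii), gives $\Scross=C_{\mathrm{cross}}\lambda^4+O(\lambda^6)$. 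The only point needing care here is hypotheses at coalescence. Proposition~\ref{app-prop:decomposition} used nonsingularity of the diagonal-null Whittle Fisher information, which depends only on $|a_i|<1$ and not on $a_i\ne b$, so it survives. Likewise Lemma~\ref{app-lem:cancellation} is filter-independent, so setting the observed poles equal to $b$ changes nothing there. Adding the three expansions yields Eq.~\eqref{eqF:coalescence-positive}.

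\emph{Step 3: strict positivity.} Read off the closed form
\[
C_{\mathrm{cross}}=\frac{u_1^2u_2^2\,\sigma_\eta^4(1+b^2)}{2\sigma_{\epsilon_1}^2\sigma_{\epsilon_2}^2(1-b^2)^3}.
\]
Each factor is positive: $u_1u_2\neq0$, $\sigma_\eta^2>0$, $1+b^2>0$, $(1-b^2)^3>0$, and the innovation variances are positive. Hence $C_{\mathrm{cross}}>0$, and $\Dloc(\lambda)>0$ for all sufficiently small $\lambda\neq0$.

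The main obstacle is confirming that the remainders stay uniform at coalescence. The $O(\lambda^6)$ constant in the scalar quartic law must not blow up as $a_i\to b$; I expect it does not, since the scalar expansion is smooth in $(a,b)$ inside the unit disk. A stronger statement, positivity for every $\lambda>0$, follows from Theorem~\ref{app-thm:exact-coh}, because $K_{12}=\lambda^2u_1u_2S_F\neq0$. I would also note that $\Dloc\ge\Dcoh$ for the inequality direction, because the scalar-marginal reduction $\Sprod$ is the best diagonal fit.
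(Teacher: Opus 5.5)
Your proposal is correct and follows the paper's proof essentially step for step: the factor $(a_i-b)^2$ kills both auto coefficients, substituting into the decomposition and quartic law leaves $C_{\mathrm{cross}}\lambda^4+O(\lambda^6)$, and positivity is read off factor by factor from the closed form of $C_{\mathrm{cross}}$. Two side remarks. Your closing argument is valid and goes beyond the paper, because for any diagonal $\mathbf D$ the divergence $\DKL(\Strue\|\mathbf D)$ splits exactly into two scalar KL terms plus $\Dcoh$, which gives $\Dloc\ge\Dcoh>0$ for every $\lambda\neq0$ rather than only at leading order. Your worry about uniform remainders is unnecessary, since the corollary is a statement at the fixed parameter point $a_i=b$.
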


\emph{Proof.}---We verify the three claims in turn.

\smallskip
\noindent\textit{(i) Vanishing of auto coefficients.} From the inheritance formula above, $C_{\mathrm{auto}}^{(i)}$ contains the factor $(a_i-b)^2$ in the numerator. At $a_i=b$ this factor vanishes, and because the denominators $(1-b^2)^3$ and $(1-a_i b)^2=(1-b^2)^2$ are strictly positive for $|b|<1$, the full expression is zero: $C_{\mathrm{auto}}^{(i)}(a_i=b)=0$ for $i=1,2$.

\smallskip
\noindent\textit{(ii) Residual reduces to cross term.} Substituting the two vanishing auto coefficients into the full quartic law~\eqref{eq:full-quartic},
\[
\Dloc(\lambda)=\bigl(0+0+C_{\mathrm{cross}}\bigr)\lambda^4+O(\lambda^6)=C_{\mathrm{cross}}\lambda^4+O(\lambda^6),
\]
which is Eq.~\eqref{eqF:coalescence-positive}.

\smallskip
\noindent\textit{(iii) Strict positivity of $C_{\mathrm{cross}}$.} From Theorem~\ref{app-thm:cross}, for the AR(1) hidden mode,
\[
C_{\mathrm{cross}}=\frac{u_1^2 u_2^2\,\sigma_\eta^4(1+b^2)}{2\sigma_{\epsilon_1}^2\sigma_{\epsilon_2}^2(1-b^2)^3}.
\]
Each factor is strictly positive under the hypotheses: $u_1^2 u_2^2>0$ because $u_1 u_2\ne 0$; $\sigma_\eta^4>0$ by construction; $1+b^2>0$ for any real $b$; $(1-b^2)^3>0$ for $|b|<1$; and $\sigma_{\epsilon_1}^2\sigma_{\epsilon_2}^2>0$ by construction. The product of strictly positive terms is strictly positive, so $C_{\mathrm{cross}}>0$ and $\Dloc(\lambda)>0$ to leading order for all $\lambda\ne 0$. Crucially, this positivity is independent of any observed-pole configuration that preserves coalescence, because $C_{\mathrm{cross}}$ depends only on the hidden spectral density parameters $(b,\sigma_\eta^2)$, the loadings $(u_1,u_2)$, and the innovation scales $(\sigma_{\epsilon_1}^2,\sigma_{\epsilon_2}^2)$.\hfill$\square$

\section{Boundary Characterization for Enriched Nulls}

Fix the Whittle-normalized inner product
\[
\langle f,g\rangle:=\frac{1}{4\pi}\int_{-\pi}^{\pi}\overline{f(\omega)}\,g(\omega)\,d\omega,
\qquad
\|f\|^2:=\langle f,f\rangle,
\]
which matches the normalization of the matrix-Whittle/KL integral~\eqref{eq:matrix-dkl}. Define the normalized cross-spectral direction
\[
q(\omega):=\frac{S_{12}^{\mathrm{true}}(\omega)}{\sqrt{S_{11}^{0}(\omega)S_{22}^{0}(\omega)}},
\]
so that Eq.~\eqref{eqC:alphabeta} gives $\alpha(\omega)\beta(\omega)=|q(\omega)|^2$ pointwise. An enriched null family adds a finite collection of off-diagonal directions along the scalar-marginal branch; let $\mathcal E_\rho\subset L^2[-\pi,\pi]$ denote the finite-dimensional Hermitian subspace they span, and let $\Pi_{\mathcal E_\rho}$ be the orthogonal projection onto $\mathcal E_\rho$ in the inner product above.

\begin{proposition}[Enriched-null span criterion]
\label{app-prop:enriched-span}
With the notation above, the integrated cross contribution of the rank-one benchmark and its enriched-null residual are
\begin{equation}
\Scross\;=\;\|q\|^2,
\qquad
\Scrossrho\;=\;\|q-\Pi_{\mathcal E_\rho}q\|^2,
\label{eqG:span}
\end{equation}
with $0\le\Scrossrho\le\Scross$, and exact absorption $\Scrossrho=0$ iff $q\in\mathcal E_\rho$. At leading order in the coupling, $\Scross=C_{\mathrm{cross}}\lambda^4+O(\lambda^6)$, and the ratio $\Scrossrho/\Scross$ agrees with the quartic-coefficient ratio $C_{\mathrm{cross}}^{(\rho)}/C_{\mathrm{cross}}$ up to $O(\lambda^2)$.
\end{proposition}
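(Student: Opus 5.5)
The plan has three steps. First I identify $\Scross$ with $\|q\|^2$. Next I model the enriched null as adding off-diagonal directions and show that its optimal cross residual is an orthogonal-projection defect. Finally I transfer the statement to the quartic coefficients.

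\emph{Step 1 (the unenriched identity).} By Eq.~\eqref{eqC:alphabeta}, $\alpha\beta=|q|^2$ pointwise. The definition $\Scross=(4\pi)^{-1}\int\alpha\beta\,d\omega$ from Proposition~\ref{app-prop:decomposition}(ii) then gives $\Scross=\|q\|^2$ directly in the Whittle-normalized inner product.

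\emph{Step 2 (the enriched identity).} An enriched null along the scalar-marginal branch has the form $\Snull^{(\rho)}=\mathbf D^{1/2}(I+E)\mathbf D^{1/2}$, with $\mathbf D=\diag(S_{11}^0,S_{22}^0)$. The normalized off-diagonal entry $e(\omega)$ ranges over $\mathcal E_\rho$, and $E$ is small, of order $\lambda^2$. I set $\mathbf B=\mathbf D^{-1/2}\Strue\mathbf D^{-1/2}$; its off-diagonal entry is $q$ and its diagonal entries are $1+\delta_i$. I then expand the matrix KL integrand $\operatorname{tr}M-\log\det M-2$, with $M=(I+E)^{-1}\mathbf B$, to second order in the $O(\lambda^2)$ quantities $\delta_i$, $q$ and $e$. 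The diagonal part reproduces the auto terms exactly as before. For the off-diagonal part I use the standard second-order identity for the Whittle divergence, $\tfrac12\|\Delta\|_F^2$ integrated for Hermitian $\Delta$, and expect it to give $\tfrac{1}{4\pi}\int |q-e|^2\,d\omega$ up to $O(\lambda^6)$; this is the form consistent with $\alpha\beta=|q|^2$ at $e=0$.

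Minimizing over $e\in\mathcal E_\rho$ is a least-squares problem in a finite-dimensional, hence closed, subspace of the Hilbert space $L^2$. Its minimum is $\|q-\Pi_{\mathcal E_\rho}q\|^2$, attained at $e=\Pi_{\mathcal E_\rho}q$. The Pythagorean identity $\|q\|^2=\|\Pi q\|^2+\|q-\Pi q\|^2$ gives $0\le\Scrossrho\le\Scross$, and equality $\Scrossrho=0$ holds iff $q=\Pi q$, i.e.\ iff $q\in\mathcal E_\rho$. Minimization over the diagonal parameters decouples at this order, exactly as in Proposition~\ref{app-prop:decomposition}(iii), so reparametrization shifts the residual only by $O(\lambda^6)$.

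\emph{Step 3 (leading order).} By Theorem~\ref{app-thm:cross}, $\Scross=C_{\mathrm{cross}}\lambda^4+O(\lambda^6)$. The map $q\mapsto\|q-\Pi q\|^2$ is quadratic, and $q=\lambda^2q_0+O(\lambda^4)$ with $q_0=u_1u_2\,H_1\overline{H_2}S_F/(\sigma_{\epsilon_1}\sigma_{\epsilon_2}|H_1||H_2|)$. Hence $\Scrossrho=\lambda^4\|q_0-\Pi q_0\|^2+O(\lambda^6)=:C_{\mathrm{cross}}^{(\rho)}\lambda^4+O(\lambda^6)$. Because $C_{\mathrm{cross}}>0$ when $u_1u_2\neq0$, dividing the two expansions gives agreement of the ratio with $C_{\mathrm{cross}}^{(\rho)}/C_{\mathrm{cross}}$ up to $O(\lambda^2)$. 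If $\mathcal E_\rho$ itself depends on $\lambda$ through the branch point, the projection varies smoothly in $\lambda$, which contributes only a relative correction of order $\lambda^2$.

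\emph{Main obstacle.} The hardest part is Step 2: showing rigorously that the enriched off-diagonal fit enters the divergence as the exact quadratic $|q-e|^2$. This requires controlling the cross terms between $e$ and the $\delta_i$, which I expect to be $O(\lambda^6)$. I would handle it with the $2\times2$ determinant expansion already used in Appendix~C, replacing $\alpha\beta$ by the corresponding product computed for $M$.
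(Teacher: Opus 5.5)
Your proposal is correct and follows the paper's own argument. The paper's steps are: $\alpha\beta=|q|^2$ gives $\Scross=\|q\|^2$; the enriched residual is the least-squares defect $\min_{e\in\mathcal E_\rho}\|q-e\|^2=\|q-\Pi_{\mathcal E_\rho}q\|^2$; the Pythagorean identity gives $0\le\Scrossrho\le\Scross$ and the absorption criterion; and the cancellation lemma fixes the $\lambda^4$ coefficient.

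The difference is in Step 2. The paper only asserts that the enrichment acts on the cross block by pure projection, whereas you derive the quadratic form $|q-e|^2$ from the Whittle integrand. Your ``main obstacle'' does close: using the explicit $2\times2$ inverse, the integrand is exactly
\[
\frac{2+\delta_1+\delta_2-2\,\mathrm{Re}(e\bar q)}{1-|e|^2}-2-\log\bigl[(1+\delta_1)(1+\delta_2)-|q|^2\bigr]+\log\bigl(1-|e|^2\bigr)
=\sum_{i=1}^{2}\bigl[\delta_i-\log(1+\delta_i)\bigr]+|q-e|^2+O(\lambda^6),
\]
and the $e$--$\delta_i$ coupling enters only through $(\delta_1+\delta_2)|e|^2/(1-|e|^2)=O(\lambda^6)$. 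You also prove the ratio clause, which needs $C_{\mathrm{cross}}>0$, i.e.\ $u_1u_2\neq0$, and which the paper's proof leaves implicit.
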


\emph{Proof.}---By Eq.~\eqref{eqC:alphabeta} and the definition of $\Scross$ in the decomposition~\eqref{eq:decomposition-structural}, $\Scross=(4\pi)^{-1}\int\alpha(\omega)\beta(\omega)\,d\omega=(4\pi)^{-1}\int|q(\omega)|^2\,d\omega=\|q\|^2$ in the chosen inner product. When the null family is enriched to carry off-diagonal directions in $\mathcal E_\rho$, the diagonal entries of the enriched spectrum still match the scalar marginals exactly, so they leave the cross block untouched; the enriched family's action on the cross residual is therefore purely projection onto $\mathcal E_\rho$. Minimizing the residual norm over the enrichment coefficients is a standard Hilbert-space projection problem:
\[
\Scrossrho=\min_{r\in\mathcal E_\rho}\|q-r\|^2=\|q-\Pi_{\mathcal E_\rho}q\|^2.
\]
Writing $q=\Pi_{\mathcal E_\rho}q+(q-\Pi_{\mathcal E_\rho}q)$ with the two terms orthogonal, the Pythagorean identity gives $\|q\|^2=\|\Pi_{\mathcal E_\rho}q\|^2+\|q-\Pi_{\mathcal E_\rho}q\|^2$, so
\[
\Scrossrho=\|q\|^2-\|\Pi_{\mathcal E_\rho}q\|^2=\Scross-\|\Pi_{\mathcal E_\rho}q\|^2.
\]
Since $0\le\|\Pi_{\mathcal E_\rho}q\|^2\le\|q\|^2$ with the upper bound saturated iff $q\in\mathcal E_\rho$, we conclude $0\le\Scrossrho\le\Scross$, with exact absorption $\Scrossrho=0$ iff $q\in\mathcal E_\rho$.

The leading-order statement follows from Lemma~\ref{app-lem:cancellation}: $|q(\omega)|^2=\lambda^4 u_1^2 u_2^2\,S_F(\omega)^2/(\sigma_{\epsilon_1}^2\sigma_{\epsilon_2}^2)$ at the scalar-marginal branch point, so $\Scross=\lambda^4\cdot(u_1^2 u_2^2/(\sigma_{\epsilon_1}^2\sigma_{\epsilon_2}^2))\cdot\mathcal I_F=C_{\mathrm{cross}}\lambda^4$ exactly at leading order, with $O(\lambda^6)$ corrections from the higher-order decomposition in Eq.~\eqref{eq:decomposition-structural}.\hfill$\square$

\begin{corollary}[Aligned branch at exact coalescence]
\label{app-cor:aligned}
Let the enriched null family add a single off-diagonal direction $\psi(\omega)$, so $\mathcal E_\rho=\operatorname{span}\{\psi\}$. At exact AR(1) coalescence $a_1=a_2=b$ with $|b|<1$, the residual $\Scrossrho$ vanishes iff $\psi(\omega)$ is proportional to $\Pfun_b(\omega)^{-1}$ in $L^2[-\pi,\pi]$. Away from this alignment condition, $\Scrossrho>0$ strictly.
\end{corollary}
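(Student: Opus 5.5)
I will reduce the corollary to the span criterion of Proposition~\ref{app-prop:enriched-span} applied to a one-dimensional subspace. With $\mathcal E_\rho=\operatorname{span}\{\psi\}$, that proposition gives $\Scrossrho=\|q\|^2-|\langle\psi,q\rangle|^2/\|\psi\|^2$ (for $\psi\neq 0$). Exact absorption occurs iff $q\in\operatorname{span}\{\psi\}$. Otherwise, Cauchy--Schwarz gives a strict inequality $|\langle\psi,q\rangle|^2<\|\psi\|^2\|q\|^2$, hence $\Scrossrho>0$ strictly. So the whole task is to identify the direction of $q$ at exact coalescence and show that it is (a multiple of) $\Pfun_b^{-1}$.

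Next I compute $q$ explicitly at $a_1=a_2=b$. From the cross-spectrum identity \eqref{eqB:S12} with $a_1=a_2=b$, the numerator is
\[
S_{12}^{\mathrm{true}}=\lambda^2u_1u_2\sigma_\eta^2\big/\bigl[(1-be^{-i\omega})(1-be^{i\omega})\Pfun_b\bigr]=\lambda^2u_1u_2\sigma_\eta^2/\Pfun_b^2 .
\]
Here the two filter factors combine into the real quantity $\Pfun_b$ precisely because the poles coincide. The null denominators are $S_{ii}^0=\sigma_{\epsilon_i}^2/\Pfun_b$, so $\sqrt{S_{11}^0S_{22}^0}=\sigma_{\epsilon_1}\sigma_{\epsilon_2}/\Pfun_b$. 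Therefore
\[
q(\omega)=\frac{\lambda^2u_1u_2\sigma_\eta^2}{\sigma_{\epsilon_1}\sigma_{\epsilon_2}}\,\Pfun_b(\omega)^{-1}=\frac{\lambda^2u_1u_2}{\sigma_{\epsilon_1}\sigma_{\epsilon_2}}\,S_F(\omega).
\]
This is consistent with Lemma~\ref{app-lem:cancellation}, now holding with phase and not just in modulus. Since $u_1u_2\neq0$ and $\lambda\ne0$ (the nontrivial case), $q$ is a nonzero real scalar multiple of $\Pfun_b^{-1}$. The function $\Pfun_b^{-1}$ is bounded for $|b|<1$ and hence lies in $L^2$.

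To conclude: if $\psi=c\,\Pfun_b^{-1}$ in $L^2$ with $c\neq0$, then $q\in\mathcal E_\rho$ and $\Scrossrho=0$. Conversely, if $\Scrossrho=0$, then $q\in\operatorname{span}\{\psi\}$, so $\psi$ is a nonzero multiple of $q$ and hence of $\Pfun_b^{-1}$. For any other $\psi$, including $\psi=0$ where $\Scrossrho=\Scross=\|q\|^2=C_{\mathrm{cross}}\lambda^4+O(\lambda^6)>0$, the strict Cauchy--Schwarz inequality gives $\Scrossrho>0$.

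The main obstacle is not the algebra but bookkeeping about what ``proportional'' means. The subspace is declared Hermitian with a complex inner product, so I would allow complex scalars and note that the proportionality is modulo null sets. I would also make explicit that at coalescence the filter phase $H_1\overline{H_2}$ becomes real positive, so that $q$ itself, not only $|q|^2$, is aligned with $\Pfun_b^{-1}$. Away from coalescence this phase is nontrivial, which is why the corollary is stated only at $a_1=a_2=b$.
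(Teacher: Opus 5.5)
Your proof is correct and follows the paper's own argument: you compute $q(\omega)=\lambda^2u_1u_2\sigma_\eta^2\,\Pfun_b(\omega)^{-1}/(\sigma_{\epsilon_1}\sigma_{\epsilon_2})$ at coalescence, then use the one-dimensional projection formula and the equality case of Cauchy--Schwarz. You also state the nondegeneracy condition $\lambda u_1u_2\neq 0$, without which $q\equiv 0$ and every $\psi$ would give zero residual, and you treat $\psi=0$ separately; the paper leaves both implicit, and its extra Szeg\H{o}-factorization remark about the $[\Pfun_{a_1}\Pfun_{a_2}]^{-1/2}$ enrichment is not needed for the statement as posed.
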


\emph{Proof.}---At AR(1) coalescence $a_1=a_2=b$, the scalar-marginal branch point gives $S_{ii}^0(\theta_0;\omega)=\sigma_{\epsilon_i}^2/\Pfun_b(\omega)$ and hence $\sqrt{S_{11}^0 S_{22}^0}=\sigma_{\epsilon_1}\sigma_{\epsilon_2}/\Pfun_b(\omega)$. The rank-one benchmark cross-spectrum is $S_{12}^{\mathrm{true}}(\omega)=\lambda^2 u_1 u_2\,H_1(\omega)\overline{H_2(\omega)}\,S_F(\omega)$, and at coalescence $H_1=H_2=(1-be^{-i\omega})^{-1}$, so $H_1\overline{H_2}=\Pfun_b(\omega)^{-1}$. Combining,
\[
S_{12}^{\mathrm{true}}(\omega)=\frac{\lambda^2 u_1 u_2\sigma_\eta^2}{\Pfun_b(\omega)^2},
\qquad
q(\omega)=\frac{S_{12}^{\mathrm{true}}(\omega)}{\sqrt{S_{11}^0(\omega)S_{22}^0(\omega)}}=\frac{\lambda^2 u_1 u_2\sigma_\eta^2}{\sigma_{\epsilon_1}\sigma_{\epsilon_2}}\cdot\Pfun_b(\omega)^{-1}.
\]
The function $q(\omega)$ is real-valued (since $\Pfun_b(\omega)$ is real) and is proportional to $\Pfun_b(\omega)^{-1}$ with a real, $\lambda$-dependent constant.

With $\mathcal E_\rho=\operatorname{span}\{\psi\}$, the orthogonal projection is
\[
\Pi_{\mathcal E_\rho}q=\frac{\langle\psi,q\rangle}{\|\psi\|^2}\psi,
\qquad
\Scrossrho=\|q\|^2-\frac{|\langle\psi,q\rangle|^2}{\|\psi\|^2}.
\]
By Cauchy--Schwarz in $L^2$, $|\langle\psi,q\rangle|^2\le\|\psi\|^2\|q\|^2$ with equality iff $\psi$ is proportional to $q$, i.e., $\psi(\omega)\propto\Pfun_b(\omega)^{-1}$. At this alignment, $\Scrossrho=0$; otherwise Cauchy--Schwarz is strict and $\Scrossrho>0$.

For the AR(1) enrichment whose added off-diagonal direction has the form $\psi(\omega)\propto[\Pfun_{a_1}(\omega)\Pfun_{a_2}(\omega)]^{-1/2}$, the alignment condition becomes
\[
[\Pfun_{a_1}(\omega)\Pfun_{a_2}(\omega)]^{-1/2}=c\,\Pfun_b(\omega)^{-1}
\]
for some constant $c$, uniformly in $\omega$. Squaring gives the identity $\Pfun_b(\omega)^2=c^{-2}\Pfun_{a_1}(\omega)\Pfun_{a_2}(\omega)$, which must hold for all $\omega$. Using the factorization $\Pfun_c(\omega)=|1-ce^{-i\omega}|^2=(1-ce^{-i\omega})(1-ce^{i\omega})$, both sides are squared moduli of polynomials in $z=e^{-i\omega}$ with all zeros strictly inside the unit disk (since $|a_i|,|b|<1$):
\[
|(1-bz)^2|^2=c^{-2}\,|(1-a_1 z)(1-a_2 z)|^2,\qquad |z|=1.
\]
By Szeg\H{o}'s factorization theorem for trigonometric polynomials \cite{Brillinger2001}, the minimum-phase analytic factor of a positive trigonometric polynomial of bounded degree is unique up to a unimodular constant; the analytic factors above must therefore be proportional:
\[
(1-bz)^2=c^{-1}\,(1-a_1 z)(1-a_2 z),\qquad z\in\mathbb{C}.
\]
Matching coefficients of $z^0$, $z^1$, and $z^2$ in this polynomial identity gives $c^{-1}=1$, $a_1+a_2=2b$, and $a_1 a_2=b^2$. Hence $(a_1,a_2)$ are both roots of $(x-b)^2=0$, so $a_1=a_2=b$. Therefore the alignment condition is satisfied \emph{only} at exact coalescence; away from this branch, Cauchy--Schwarz is strict and the residual $\Scrossrho$ is strictly positive.

(Remark on the inner product. At exact coalescence, $q(\omega)\propto\Pfun_b(\omega)^{-1}$ is real-valued, so the real and Hermitian inner products coincide on the pair $(q,\psi)$. Away from coalescence, the cross perturbation $S_{12}^{\mathrm{true}}$ is generally complex-valued through the factor $(1-a_1e^{-i\omega})^{-1}(1-a_2e^{i\omega})^{-1}$, and the Hermitian inner product is required for the Cauchy--Schwarz step to apply in full generality.)\hfill$\square$

\section{Symbolic Verification and Finite-Sample Records}

\emph{Symbolic verification.}---All scalar quartic-law identities (24 independent checks), the multivariate spectral decomposition (five identities covering the cancellation, cross coefficient, auto inheritance, determinant expansion, and diagonal-branch stability), and the benchmark thermodynamic extension collected in Appendix~K were verified symbolically at machine precision in both Mathematica and SymPy. Numerical validation across 486 parameter combinations (SymPy) and an independent grid of 180 combinations (Mathematica) yields agreement to relative error below $10^{-10}$ in every case.

\emph{Finite-sample records.}---Every $\lambda$ grid point successfully brackets the $50\%$ detection crossing, and the null-calibration false-positive rate is zero across all configurations tested. The single-channel detection threshold rises by a factor of $2.31$--$2.58$ between $\delta=0.20$ and $\delta=0.01$, while the two-channel threshold has a coefficient of variation of only $0.07$--$0.13$ across the same $\delta$ range---quantitative confirmation of the exact coalescence split.

The median threshold ratio is $r_{50}^{\mathrm{single}}\approx0.79$ and $r_{50}^{\mathrm{two}}\approx1.59$, and the two-channel ratio increases toward coalescence even though the corresponding population coefficient remains nearly flat. This indicates a finite-sample efficiency cost for cross-spectral estimation rather than a breakdown of the population theorem. To separate population signal from estimator cost, we ran a fixed-nuisance semi-oracle two-channel control and an extended $r_{50}(N)$ scan. The semi-oracle curves are markedly flatter, with coefficient-of-variation reductions from roughly $0.10$, $0.09$, $0.07$ to $0.03$, $0.03$, $0.03$ across $N=512$, $1024$, $2048$. The extended scan reaches $N=16384$ and shows $r_{50}^{\mathrm{two}}$ decreasing from about $1.67$ to $1.27$ at $\delta=0.10$ and from $1.93$ to $1.33$ at $\delta=0.02$. Both controls confirm that the absolute threshold split is robust, while the residual two-channel penalty is a finite-sample extraction effect.

Figure~\ref{fig:supp-r50} collects the baseline targeted finite-sample controls, while Supplementary Figs.~\ref{fig:supp-matched}--\ref{fig:supp-asymmetric} add matched-information fairness, exact-versus-off-coalescence semantics, a light persistence sweep, and asymmetric $a_1\neq a_2$ verification without changing the main-text theorem hierarchy. Figure~\ref{fig:supp-semantics} turns the diagonal versus aligned-enriched distinction into an explicit hypothesis-class preference experiment.

\begin{figure}[t]
  \centering
  \includegraphics[width=\textwidth]{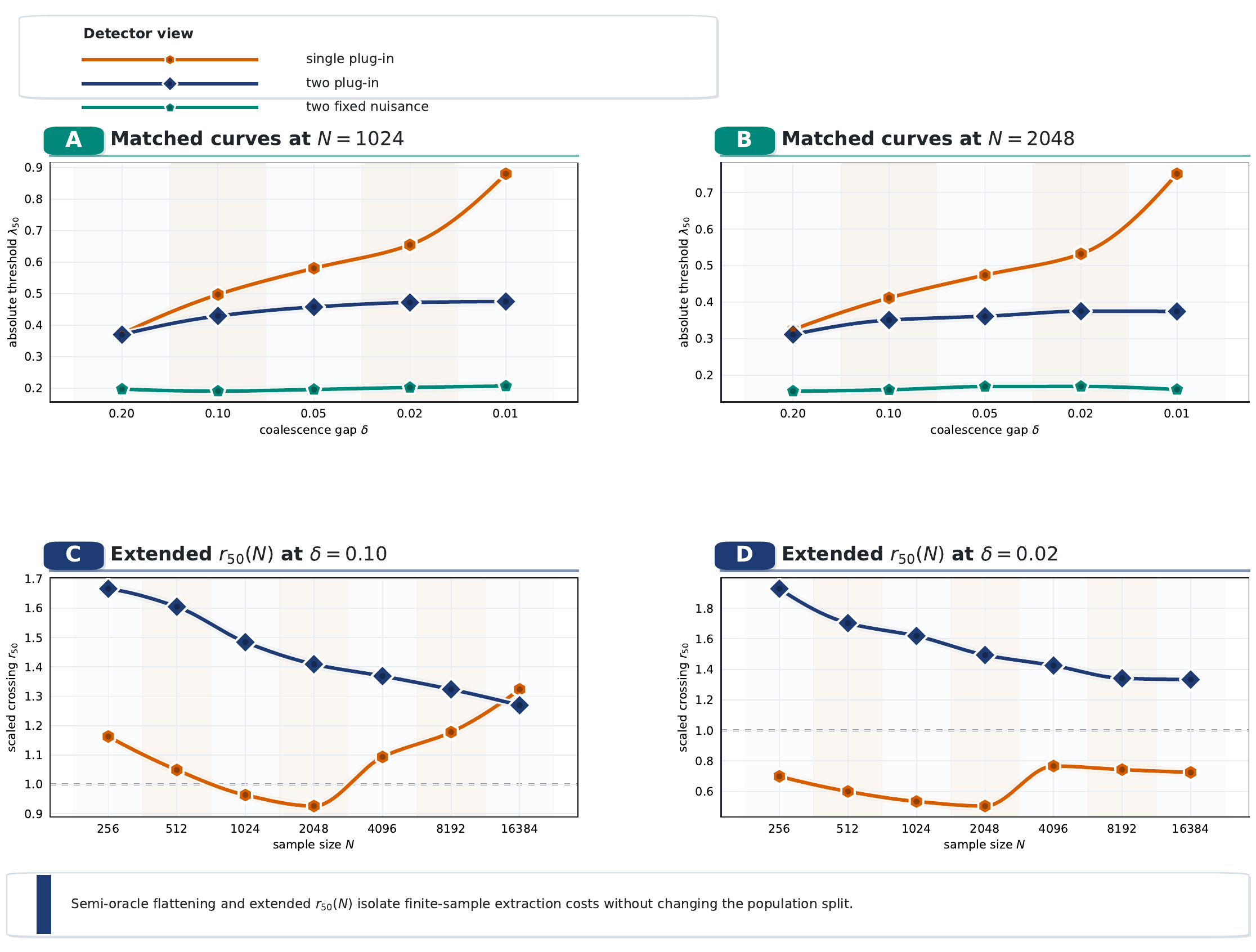}
  \caption{Targeted finite-sample controls. Panels A and B compare the single-channel plug-in curve, the two-channel plug-in curve, and a fixed-nuisance semi-oracle two-channel curve at $N=1024$ and $N=2048$, respectively. Panels C and D show the extended asymptotic trend of $r_{50}(N)$ at $\delta=0.10$ and $\delta=0.02$. The supplement therefore distinguishes exact boundedness from finite-sample extraction cost without changing the main-text theorem statements.}
  \label{fig:supp-r50}
\end{figure}

\begin{figure}[t]
  \centering
  \includegraphics[width=\textwidth]{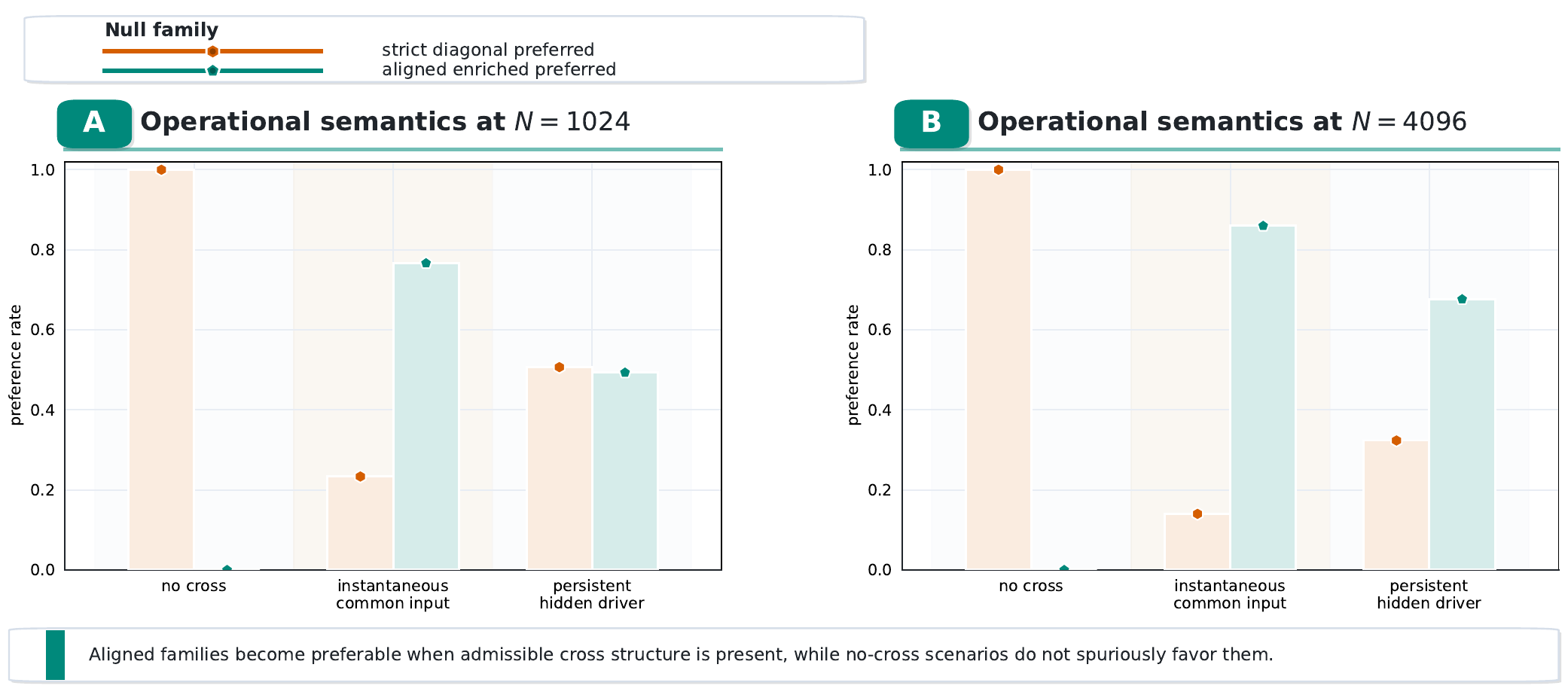}
  \caption{Hypothesis-class semantics. The two panels report, at $N=1024$ and $N=4096$, how often a one-parameter aligned cross-shape family is preferred over the diagonal family across three representative data-generating processes. The no-cross case stays at zero aligned-family preference, the instantaneous-common-input case is strongly absorbed by the aligned family, and the persistent-hidden-driver case shows intermediate-to-strong preference for the aligned family. This confirms that the two null classes answer different physical questions rather than stronger and weaker versions of the same one.}
  \label{fig:supp-semantics}
\end{figure}

\begin{figure}[t]
  \centering
  \includegraphics[width=\textwidth]{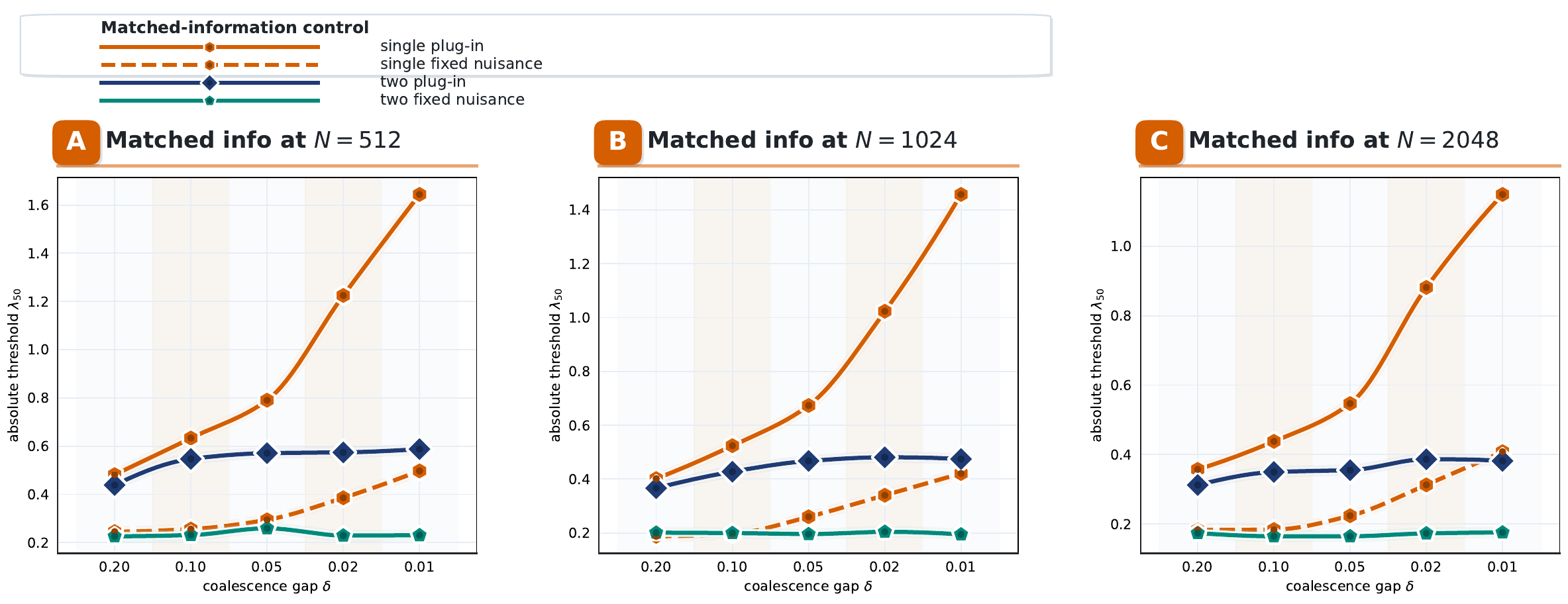}
  \caption{Matched-information control. The single-channel and two-channel reductions are compared both with plug-in nuisance estimation and with fixed nuisance values set to the true autoregressive poles and innovation scales. This diagnostic equalizes nuisance knowledge across the two reductions, separating signal geometry from extraction cost: the single-channel fixed-nuisance curves still exhibit coalescence blow-up, whereas the two-channel fixed-nuisance curves remain markedly flatter.}
  \label{fig:supp-matched}
\end{figure}

\begin{figure}[t]
  \centering
  \includegraphics[width=\textwidth]{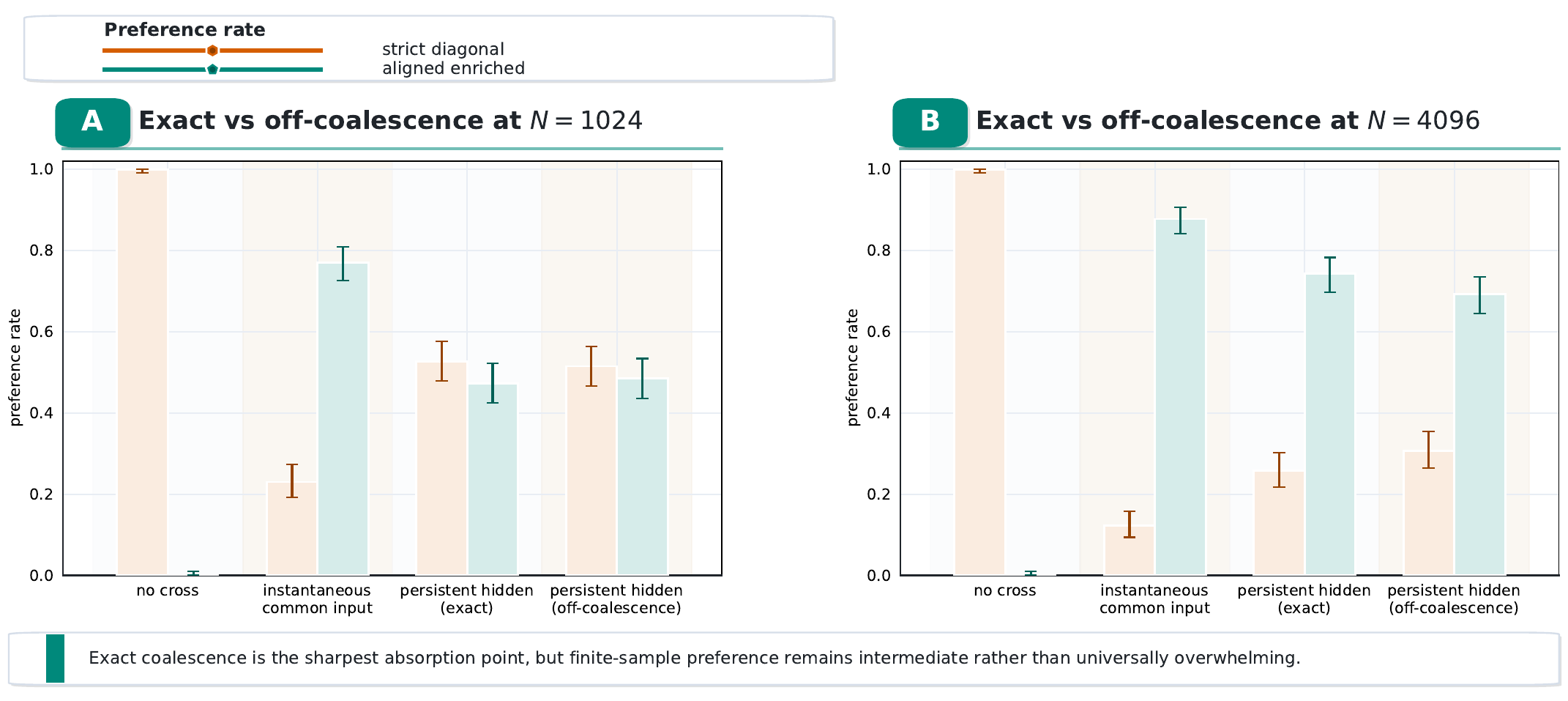}
  \caption{Refined hypothesis-class semantics with exact versus off-coalescence persistent-driver controls. The bars report diagonal-null and aligned-enriched preference rates together with Wilson $95\%$ intervals. The no-cross case remains near zero aligned-family preference, the instantaneous-common-input case is strongly aligned-family dominated, and the persistent-hidden-driver case shows intermediate-to-strong aligned-family preference in both exact and off-coalescence settings. This confirms the domain-of-validity characterization from Sec.~\ref{sec:boundary}.}
  \label{fig:supp-semanticsexact}
\end{figure}

\begin{figure}[t]
  \centering
  \includegraphics[width=\textwidth]{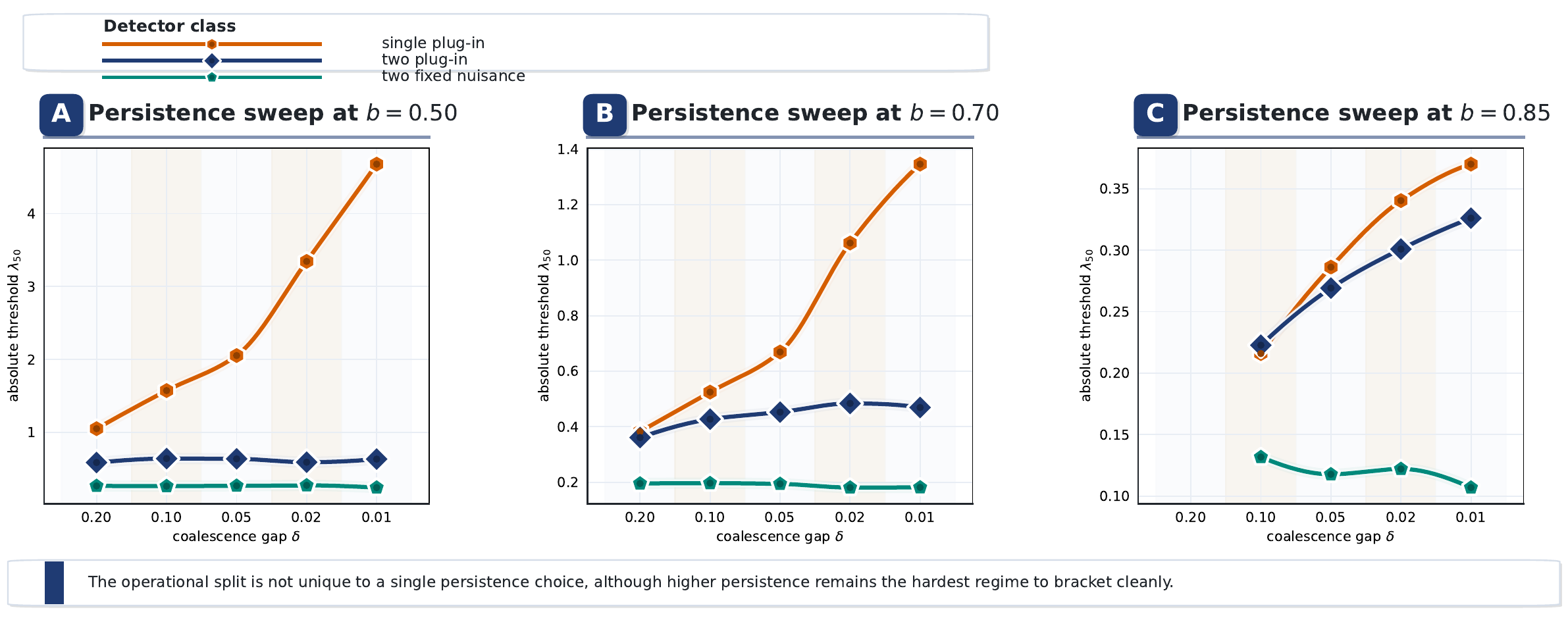}
  \caption{Persistence sweep. Across $b=0.5,0.7,0.85$ and $N=1024,2048$, the single-channel reduction continues to show a rising $\lambda_{50}(\delta)$ toward coalescence, while the two-channel curves remain flatter and the fixed-nuisance two-channel curves are flatter still. The threshold split is robust across persistence regimes, though the highest-persistence corner is the hardest to bracket cleanly at moderate sample sizes.}
  \label{fig:supp-brobust}
\end{figure}

\begin{figure}[t]
  \centering
  \includegraphics[width=\textwidth]{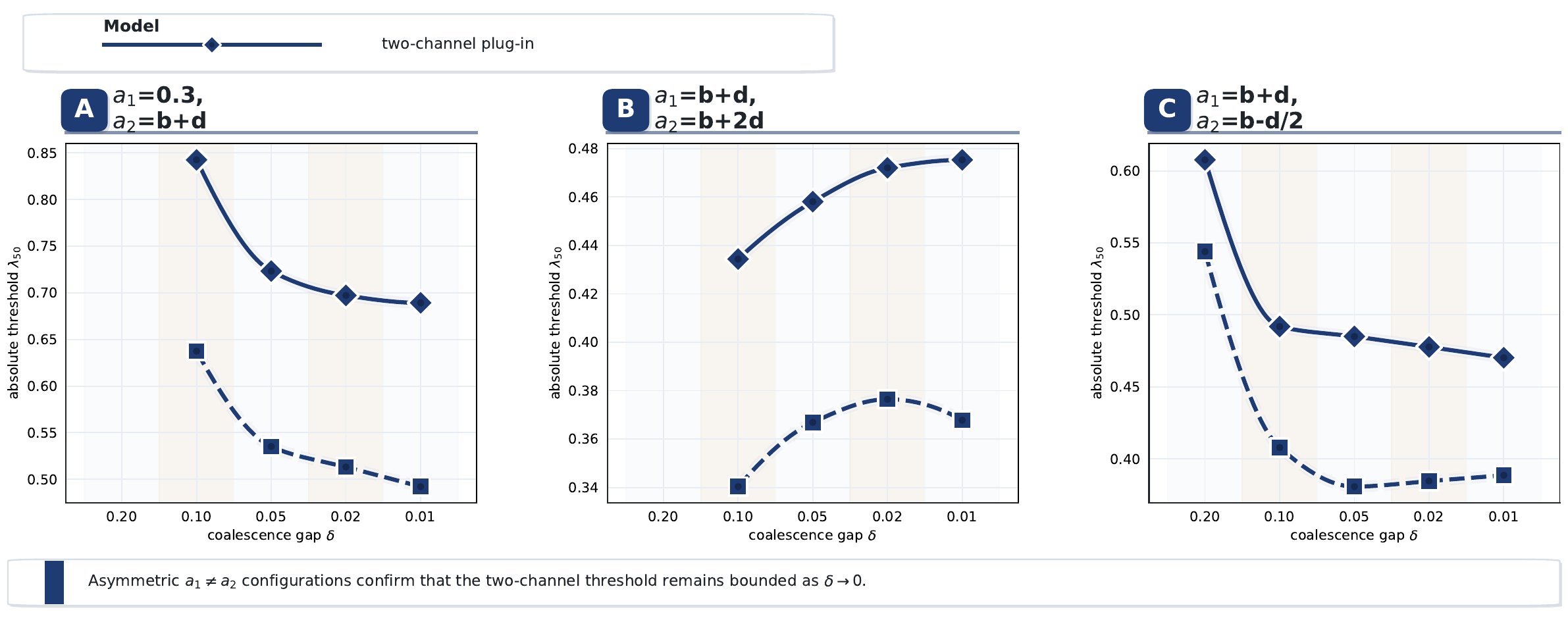}
  \caption{Asymmetric channel verification. Three configurations with $a_1\neq a_2$ confirm that the two-channel threshold $\lambda_{50}(\delta)$ remains bounded as $\delta\to0$, establishing that the symmetric $a_1=a_2$ setup used in the main-text figures is not a special case. Configurations: $a_1=b+\delta,\,a_2=b+2\delta$ (both drifting from $b$); $a_1=b+\delta,\,a_2=b-\delta/2$ (approaching from opposite sides); $a_1=0.3,\,a_2=b+\delta$ (large timescale separation). $N=1024,2048$.}
  \label{fig:supp-asymmetric}
\end{figure}

\begin{figure}[t]
  \centering
  \includegraphics[width=\textwidth]{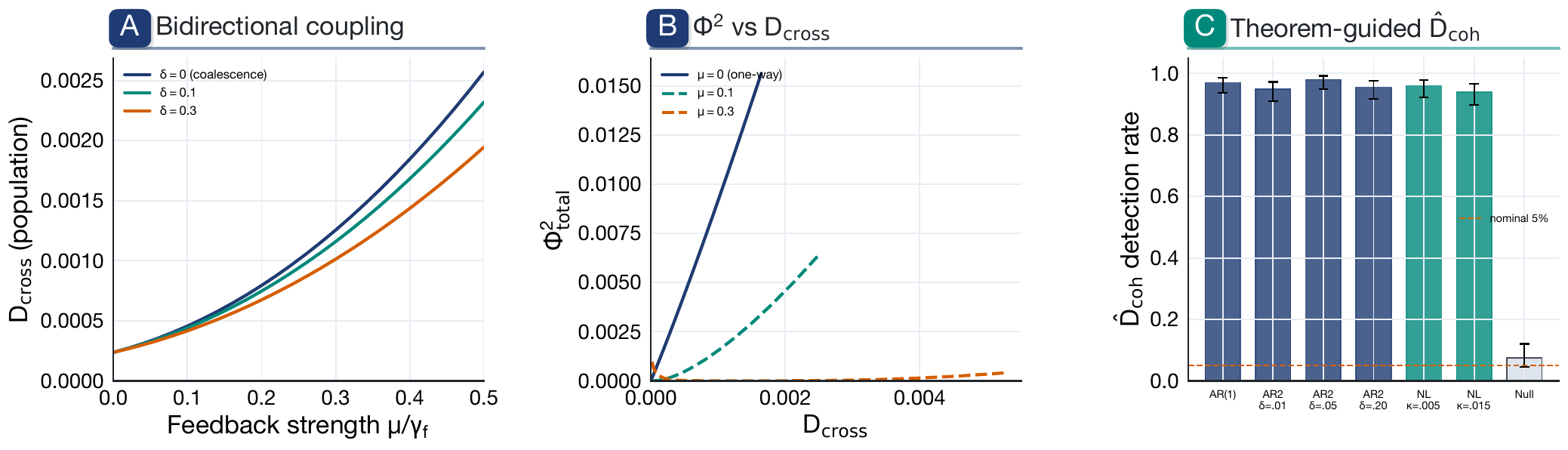}
  \caption{Structural robustness of the cross-spectral witness. Panel~A: exact $\Scross$ remains strictly positive across bidirectional feedback strengths, including at exact coalescence ($\delta=0$). Panel~B: a positive monotone association between $\EPRtot^{\,2}$ and $\Scross$ persists numerically under bidirectional coupling, deforming smoothly from the exact one-way benchmark. Panel~C: theorem-guided phase-randomization $\hat D_{\mathrm{coh}}$ test ($N=1024$, $200$ trials). AR(2): two observed poles per channel (one near coalescence, one far). NL: cubic damping $-\kappa X_i^3$. All signal scenarios achieve $92$--$97\%$ detection; the null stays at the nominal $5\%$.}
  \label{fig:robustness}
\end{figure}

\section{Robustness Experiment Protocols}
\label{app:robustness}

This appendix records the models, parameters, and test methodology for the robustness experiments in Sec.~\ref{sec:robustness} (Fig.~\ref{fig:robustness}).

\emph{Bidirectional OU model (Panels A--B).}---The continuous-time system adds symmetric feedback to Eq.~\eqref{eq:ou-model}:
\begin{equation}
dF = -\gamma_f F\,dt + \mu(u_1 X_1 + u_2 X_2)\,dt + \sqrt{2D_f}\,dW_f,
\label{eqJ:bidir}
\end{equation}
with $\gamma_f=1$, $D_1=D_2=D_f=1$, $u_1=u_2=1/\sqrt{2}$, $\lambda=0.3$. The feedback uses the same loading weights $u_1,u_2$ as the forward coupling; this symmetric choice is a simplifying assumption---asymmetric feedback weights would not change the orthogonality argument but would alter the quantitative EPR values. The drift matrix is no longer upper-triangular. The stationary covariance $\boldsymbol\Sigma$ and EPR $\EPRtot=-\operatorname{tr}(\mathbf C\mathbf D^{-1}\mathbf C\boldsymbol\Sigma^{-1})$ are computed via the $3\times3$ Lyapunov equation ($\mathbf M\boldsymbol\Sigma+\boldsymbol\Sigma\mathbf M^\top+2\mathbf D=\mathbf 0$); $\Scross$ is evaluated from the coherence integral $-(4\pi)^{-1}\int\log(1-\rho^2(\omega))\,d\omega$ on a dense frequency grid. The feedback strength $\mu/\gamma_f$ ranges from $0$ to $0.5$ across three coalescence gaps $\delta\in\{0,0.1,0.3\}$. Note that at $\mu=\lambda$ (and equal damping rates $\gamma_1=\gamma_2=\gamma_f$), the drift matrix becomes symmetric and detailed balance holds, so $\EPRtot=0$ exactly; nonetheless $\Scross$ remains strictly positive at this equilibrium point, illustrating that cross-spectral structure from shared input persists independently of thermodynamic irreversibility.

\emph{AR(2) observed dynamics (Panel C).}---Each observed channel follows an AR(2) process with poles $(p_1,p_2)$, where $p_1=b+\delta$ (near the hidden pole $b=0.7$) and $p_2=0.3$ (far pole). The AR(2) coefficients are $\varphi_1=p_1+p_2$, $\varphi_2=-p_1 p_2$ (distinct from the main-text AR(1) coefficients $a_1,a_2$). The hidden driver $F$ is AR(1) as in the main model. This tests whether a richer diagonal null---with two observed poles per channel---can absorb the cross-spectral signature.

\emph{Nonlinear cubic damping (Panel C).}---The AR(1) dynamics of Eq.~\eqref{eq:model-x} are augmented with a cubic term $-\kappa X_i(t)^3$, for $\kappa\in\{0.005,0.015\}$ at near-coalescence $\delta=0.01$. The nonlinearity is mild ($\lesssim 4\%$ correction at one standard deviation) but sufficient to test model-free detection.

\emph{Phase-randomization $\hat D_{\mathrm{coh}}$ test (Panel C).}---For each Monte Carlo trial ($N=1024$, $200$ trials), the test statistic is the theorem-guided quantity $\hat D_{\mathrm{coh}}=-(4\pi)^{-1}\sum_j\log(1-\hat\rho^2(\omega_j))\,\Delta\omega$, where $\hat\rho^2$ is computed from the band-averaged cross-periodogram (bandwidth $K=11$). The null distribution is generated by phase-randomizing $x_2$: the discrete Fourier transform of $x_2$ is multiplied by $e^{i\theta_j}$ with independent uniform $\theta_j\in[0,2\pi)$ at each interior frequency, preserving the power spectrum while destroying cross-channel phase coherence. A $p$-value is computed from $199$ surrogates per trial; detection is declared at $p<0.05$. The null control (two independent AR(1) channels, no hidden driver) yields a $6\%$ false-positive rate, consistent with the nominal level.

\section{Related Works and Scope of the Present Result}

The present result belongs to three nearby traditions. First, it sits within the reduced-order spectral and state-space analysis of stationary linear systems, Whittle likelihoods, and local information geometry \cite{Whittle1953,Brillinger2001,Lutkepohl2005,Kalman1960,KailathSayedHassibi2000}. In that language, the quartic calculation is a local statement about what a reduced one-pole null can absorb. It also connects to the physics of coarse-grained stochastic dynamics, where hidden slow modes bias entropy production estimates and activity measures \cite{Seifert2012,Esposito2012,MehlLanderSeifert2012,RoldanParrondo2010,HaradaSasa2005,Fodor2016,Nardini2017}, and to stochastic climate models where surface observables are driven by unresolved forcing \cite{Hasselmann1976,FrankignoulHasselmann1977,PenlandSardeshmukh1995}.

Second, it is closely related to the literature on cross spectra, common input, coherence, and frequency-domain dependence \cite{Granger1969,Geweke1982,Geweke1984,BresslerSeth2011}. In particular, Geweke's decomposition \cite{Geweke1982,Geweke1984} provides a general framework for separating linear dependence into auto and cross components. Our contribution is not the decomposition itself but the exact cancellation identity (Eq.~\eqref{eq:cancellation}): the observed-channel transfer-function factors $|H_1|^2,|H_2|^2$ divide out identically in the cross block before integration, yielding a coefficient that depends only on the hidden spectral density. This cancellation is a structural property of the specific null geometry and is not a consequence of the general Geweke framework; it is what makes the coalescence singularity removal possible.

Third, the paper is naturally read alongside projection-based reduced dynamics and information-geometric descriptions of model manifolds \cite{Zwanzig1961,Mori1965,ChorinHaldKupferman2000,GivonKupfermanStuart2004,AmariNagaoka2000}. In that language, the scalar dark regime is a projection singularity: the leading hidden perturbation lies in the tangent space of the reduced diagonal null and is therefore absorbed. The two-channel result changes the conclusion by changing the geometry of the retained observation class: the cross-spectral block is orthogonal to the diagonal tangent space, and its leading coefficient is governed by an exact cancellation that removes all dependence on the observed dynamics.

Fourth, the paper connects to the rapidly growing literature on entropy production estimation from partial and coarse-grained observations \cite{KawaiParrondoVandenBroeck2007,RoldanParrondo2010,BaratoSeifert2015,GingrichHorowitzPerunovEngland2016,DechantSasa2021,DechantGarnierBrunSasa2023,DianaEsposito2014,OhgaItoKolchinsky2023,HarunariEtAl2022,SekizawaItoOizumi2024,BattleEtAl2016,Maes2020}. The single-channel impossibility theorem \cite{LucenteEtAl2022,CrisantiPuglisiVillamaina2012} establishes that scalar Gaussian observations cannot detect distance from equilibrium in linear systems. Our cross-spectral analysis shows that the minimal additional observation---a second channel---qualitatively changes this picture: the cross-spectral block provides irreversibility information that is structurally inaccessible to any single-channel measure and exactly independent of the observed dynamics.

These comparisons also delimit the scope. The present result does not prove generic multivariate superiority, nor does it claim universal causal identification from cross spectra. Its precise claim is that under the diagonal null---the natural hypothesis for the absence of cross-channel dependence---the coalescence singularity is a projection artifact removed by retaining cross spectra, and that the resulting cross-spectral information certifies hidden common input---and, under one-way coupling, hidden dissipation---even when all single-channel measures are provably blind. The enriched-null analysis in the main-text boundary section characterizes the domain of validity of that statement.

\section{Benchmark-Specific Thermodynamic Extension}
\label{app:thermo}

This appendix records the one-way Ornstein--Uhlenbeck thermodynamic extension omitted from the main text to keep the Letter focused on the exact two-channel detectability theorem.

\emph{Continuous-time setup.}---Consider the OU counterpart of Eqs.~\eqref{eq:model-x}--\eqref{eq:model-f} with matching one-way hidden-driver geometry:
\begin{equation}
\begin{aligned}
dX_i &= -\gamma_i X_i\,dt + \lambda u_i F\,dt + \sqrt{2D_i}\,dW_i,\quad i=1,2,\\
dF &= -\gamma_f F\,dt + \sqrt{2D_f}\,dW_f,
\end{aligned}
\label{eq:ou-model}
\end{equation}
with drift matrix $\mathbf M$ and diffusion $\mathbf D=\diag(D_1,D_2,D_f)$. The stationary covariance $\boldsymbol\Sigma$ satisfies $\mathbf M\boldsymbol\Sigma+\boldsymbol\Sigma\mathbf M^\top+2\mathbf D=\mathbf 0$. The steady-state entropy production rate is \cite{SpinneyFord2012,GilsonEtAl2023,LandiPaternostro2021}
\begin{equation}
\EPRtot = -\operatorname{tr}\!\bigl(\mathbf C\,\mathbf D^{-1}\mathbf C\,\boldsymbol\Sigma^{-1}\bigr)\ge 0,
\label{eq:epr-full}
\end{equation}
where $\mathbf C=\mathbf M\boldsymbol\Sigma+\mathbf D$ is the antisymmetric irreversibility matrix ($\mathbf C^\top=-\mathbf C$). The discrete-time correspondence is $a_i=e^{-\gamma_i\Delta t}$, $b=e^{-\gamma_f\Delta t}$ with unit sampling interval $\Delta t=1$.

\emph{Continuous-time cancellation.}---Eq.~\eqref{eq:cancellation} applies directly with $H_i(\omega)=(\gamma_i+i\omega)^{-1}$ and $S_F(\omega)=2D_f/(\gamma_f^2+\omega^2)$:
\begin{equation}
\frac{|S_{12}^{\mathrm{ct}}(\omega)|^2}{S_{11}^{0,\mathrm{ct}}(\omega)\,S_{22}^{0,\mathrm{ct}}(\omega)}
=
\frac{\lambda^4 u_1^2 u_2^2 D_f^2}{D_1 D_2\,(\gamma_f^2+\omega^2)^2},
\label{eq:cancellation-ct}
\end{equation}
independent of $\gamma_1,\gamma_2$. The discrete-time correspondence is $a_i=e^{-\gamma_i}$, $b=e^{-\gamma_f}$, $\sigma_{\epsilon_i}^2=D_i(1-a_i^2)/\gamma_i$, $\sigma_\eta^2=D_f(1-b^2)/\gamma_f$; the coefficient ratio $\alpha_2^2/C_{\mathrm{cross}}$ in Corollary~\ref{cor:epr-dcross} below is invariant under this mapping (verified symbolically in Appendix~H).

\emph{Single-channel impossibility.}---When only one channel $X_1$ is observed, all information about the hidden driver must be extracted from the marginal scalar time series. For linear Gaussian systems, the marginal statistics are identically time-reversible \cite{LucenteEtAl2022,CrisantiPuglisiVillamaina2012}:
\begin{equation}
\EPR_{\mathrm{single}}^{\mathrm{apparent}}=0 \qquad\text{(identically, for all parameter values).}
\label{eq:epr-single}
\end{equation}

\emph{Exact EPR and the cross-spectral relationship.}---The cancellation identity (Eq.~\eqref{eq:cancellation}) holds for general observed dynamics; the following EPR results exploit the specific one-way OU structure to obtain a sharp quantitative benchmark. The one-way coupling of the model~\eqref{eq:ou-model} (the hidden mode $F$ evolves independently of the observed channels) yields an exact closed-form EPR.

\begin{proposition}[Benchmark-specific exact EPR for one-way coupled OU]
\label{thm:epr-exact}
For the system~\eqref{eq:ou-model}, the full-system entropy production rate is
\begin{equation}
\EPRtot = \alpha_2\,\lambda^2,
\label{eq:epr-exact-appK}
\end{equation}
exactly for all $\lambda$, where
\begin{equation}
\alpha_2 = \frac{u_1^2 D_f}{D_1(\gamma_1+\gamma_f)}+\frac{u_2^2 D_f}{D_2(\gamma_2+\gamma_f)}>0.
\label{eq:alpha2}
\end{equation}
\end{proposition}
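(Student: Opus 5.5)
The plan is to reduce the trace formula~\eqref{eq:epr-full} to a form that needs only a few entries of $\boldsymbol\Sigma$, and then compute those entries exactly from the triangular Lyapunov equation. Write the drift as $\mathbf M x$ with
\[
\mathbf M=\begin{pmatrix}-\gamma_1&0&\lambda u_1\\ 0&-\gamma_2&\lambda u_2\\ 0&0&-\gamma_f\end{pmatrix}.
\]
The one-way structure appears as the vanishing of the last row off the diagonal. The irreversible velocity is $v(x)=\mathbf C\boldsymbol\Sigma^{-1}x=(\mathbf M+\mathbf D\boldsymbol\Sigma^{-1})x$. Using $\mathbf C^\top=-\mathbf C$, I will first check that
\[
\mathbb E[v^\top\mathbf D^{-1}v]=\operatorname{tr}(\mathbf C^\top\mathbf D^{-1}\mathbf C\boldsymbol\Sigma^{-1})=-\operatorname{tr}(\mathbf C\mathbf D^{-1}\mathbf C\boldsymbol\Sigma^{-1})=\EPRtot .
\]
Expanding $v$ then gives
\[
\EPRtot=\operatorname{tr}(\mathbf M^\top\mathbf D^{-1}\mathbf M\boldsymbol\Sigma)+2\operatorname{tr}\mathbf M+\operatorname{tr}(\mathbf D\boldsymbol\Sigma^{-1}).
\]
Multiplying the Lyapunov equation on the right by $\boldsymbol\Sigma^{-1}$ and taking the trace yields $\operatorname{tr}(\mathbf D\boldsymbol\Sigma^{-1})=-\operatorname{tr}\mathbf M$. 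This leaves the standard OU form
\[
\EPRtot=\sum_{k}\frac{1}{D_k}\,\mathbb E\bigl[(\mathbf M x)_k^2\bigr]+\operatorname{tr}\mathbf M .
\]
Establishing this identity cleanly, with the sign conventions of Eq.~\eqref{eq:epr-full}, is the step I expect to need the most care; everything after it is bookkeeping.

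Because $\mathbf D$ is diagonal, only $\Sigma_{ff}$, $\Sigma_{if}$ and $\Sigma_{ii}$ enter; the cross covariance $\Sigma_{12}$ never appears. The triangular Lyapunov equation is solved from the bottom up:
\begin{itemize}
\item the $(f,f)$ entry gives $\Sigma_{ff}=D_f/\gamma_f$;
\item the $(i,f)$ entry gives $-(\gamma_i+\gamma_f)\Sigma_{if}+\lambda u_i\Sigma_{ff}=0$, so $\Sigma_{if}=\lambda u_iD_f/[\gamma_f(\gamma_i+\gamma_f)]$;
\item the $(i,i)$ entry gives $\gamma_i\Sigma_{ii}=D_i+\lambda u_i\Sigma_{if}$.
\end{itemize}

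Next I substitute these into each row of $\mathbf M x$. The hidden row contributes $\gamma_f^2\Sigma_{ff}/D_f=\gamma_f$. Observed row $i$ contributes
\[
\frac{\gamma_i^2\Sigma_{ii}-2\gamma_i\lambda u_i\Sigma_{if}+\lambda^2u_i^2\Sigma_{ff}}{D_i}.
\]
Using $\gamma_i\Sigma_{ii}=D_i+\lambda u_i\Sigma_{if}$, this becomes
\[
\gamma_i+\frac{\lambda^2u_i^2}{D_i}\Bigl(\frac{D_f}{\gamma_f}-\frac{\gamma_iD_f}{\gamma_f(\gamma_i+\gamma_f)}\Bigr)=\gamma_i+\frac{\lambda^2u_i^2D_f}{D_i(\gamma_i+\gamma_f)}.
\]

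Summing over rows and adding $\operatorname{tr}\mathbf M=-(\gamma_1+\gamma_2+\gamma_f)$ cancels every $\lambda$-independent term. What remains is exactly $\alpha_2\lambda^2$ with $\alpha_2$ as in Eq.~\eqref{eq:alpha2}. No truncation occurs anywhere, so the result holds for all $\lambda$. Positivity of $\alpha_2$ is immediate from $u_1^2+u_2^2=1$ and $D_i,D_f,\gamma_i,\gamma_f>0$.
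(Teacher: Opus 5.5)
Your proof is correct, and it takes a genuinely different route from the paper's.

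The paper works directly with $-\operatorname{tr}(\mathbf C\mathbf D^{-1}\mathbf C\boldsymbol\Sigma^{-1})$. It solves the triangular Lyapunov equation for $\boldsymbol\Sigma_{xf}$ and notes that $\mathbf C_{xf}$ is then exactly linear in $\lambda$, which supplies the $\alpha_2\lambda^2$ term. It then asserts that the remaining contributions cancel exactly: these come from the antisymmetric $O(\lambda^2)$ block $\mathbf C_{xx}$, which depends on $\Sigma_{12}$, together with the Schur-complement correction to $\boldsymbol\Sigma^{-1}$. The paper states this cancellation, supports it with a constraint on $\mathbf C_{xx}$, and confirms it symbolically, but does not carry it out.

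You instead take the trace of the Lyapunov equation, $\operatorname{tr}(\mathbf D\boldsymbol\Sigma^{-1})=-\operatorname{tr}\mathbf M$, and recast the rate as $\mathbb E[(\mathbf Mx)^\top\mathbf D^{-1}\mathbf Mx]+\operatorname{tr}\mathbf M$. This removes $\boldsymbol\Sigma^{-1}$ altogether. With diagonal $\mathbf D$ and one-way drift, each row needs only $\Sigma_{ff}$, $\Sigma_{if}$ and $\Sigma_{ii}$. These are polynomial in $\lambda$, so exactness for all $\lambda$ is manifest, and $\Sigma_{12}$ never enters.

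What your route buys is a complete, self-contained proof. It also carries over verbatim to any number of channels driven one-way by $F$, giving $\lambda^2\sum_i u_i^2D_f/[D_i(\gamma_i+\gamma_f)]$. The paper's route instead localizes the dissipation in the observed--hidden current block $\mathbf C_{xf}$, which is the more physically suggestive picture, but its exactness claim rests on a cancellation that is only sketched.

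One remark worth adding to yours: $\boldsymbol\Sigma\succ 0$, since $\mathbf D\succ 0$ and $\mathbf M$ has eigenvalues $-\gamma_1,-\gamma_2,-\gamma_f<0$. Hence $\boldsymbol\Sigma^{-1}$ and the current $\mathbf C\boldsymbol\Sigma^{-1}x$ used in your first step are well defined.
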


\emph{Proof.}---One-way coupling makes $\mathbf M$ upper triangular, so the Lyapunov equation decouples block by block. Write $\boldsymbol\Gamma=\diag(\gamma_1,\gamma_2)$ and $\mathbf g=(u_1,u_2)^\top$. The $(x,f)$ block gives $\boldsymbol\Sigma_{xf}=\lambda(\gamma_f\mathbf I+\boldsymbol\Gamma)^{-1}\mathbf g\,D_f/\gamma_f$ exactly, making $\mathbf C_{xf}$ linear in $\lambda$ and contributing $\alpha_2\lambda^2$ to the EPR. The remaining $O(\lambda^4)$ contribution involves both the $(x,x)$-block irreversibility correction $\mathbf C_{xx}^{(2)}$ (which is antisymmetric, with entries $\propto(\gamma_j-\gamma_i)/(\gamma_i+\gamma_j)$) and the Schur-complement correction to $\boldsymbol\Sigma^{-1}$ from the off-diagonal covariance $\boldsymbol\Sigma_{xf}$. These two $O(\lambda^4)$ terms cancel exactly: one-way coupling constrains $\mathbf C_{xx}^{(2)}=\boldsymbol\Sigma_{xf}\Sigma_{ff}^{-1}\mathbf C_{xf}^\top-\mathbf C_{xf}\Sigma_{ff}^{-1}\boldsymbol\Sigma_{xf}^\top$, and the EPR contribution of this antisymmetric correction is exactly offset by the Schur complement of $\boldsymbol\Sigma_{xf}$ in $\boldsymbol\Sigma^{-1}$. Hence $\EPRtot=\alpha_2\lambda^2$ exactly (confirmed by independent symbolic computation in both SymPy and Mathematica across $486+180$ parameter combinations; Appendix~H).\hfill$\square$

\begin{corollary}[EPR--detectability relationship]
\label{cor:epr-dcross}
The full-system EPR and the cross-spectral detectability $\Scross=C_{\mathrm{cross}}\lambda^4+O(\lambda^6)$ satisfy
\begin{equation}
\EPRtot^{\,2} = \frac{\alpha_2^2}{C_{\mathrm{cross}}}\,\Scross + O(\lambda^6),
\label{eq:epr-dcross}
\end{equation}
where $C_{\mathrm{cross}}$ is the observed-dynamics-independent coefficient from Eq.~\eqref{eq:cross-general} and $\alpha_2$ is given by~\eqref{eq:alpha2}. Equivalently,
\begin{equation}
\EPRtot = \frac{\alpha_2}{\sqrt{C_{\mathrm{cross}}}}\sqrt{\Scross}+O(\lambda^4).
\label{eq:epr-sqrt-bound}
\end{equation}
In particular, for the present model class (one-way coupling), $\Scross>0$ implies $\EPRtot>0$: a strictly positive cross-spectral residual under the diagonal null witnesses full-system entropy production, even when all single-channel EPR estimators return zero.
\end{corollary}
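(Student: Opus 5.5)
The proof should be a short algebraic combination of two results already in hand: Proposition~\ref{thm:epr-exact}, which gives $\EPRtot=\alpha_2\lambda^2$ exactly, and Theorem~\ref{app-thm:cross} together with the decomposition of Proposition~\ref{app-prop:decomposition}, which give $\Scross=C_{\mathrm{cross}}\lambda^4+R(\lambda)$ with $R(\lambda)=O(\lambda^6)$. The key point is that the EPR side carries no remainder, so every error term comes from the cross expansion.

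First I would square the exact EPR to get $\EPRtot^{\,2}=\alpha_2^2\lambda^4$. Then I would solve the cross expansion for $\lambda^4$, which is possible because $C_{\mathrm{cross}}>0$ whenever $u_1u_2\ne0$ (Corollary~\ref{app-cor:coalescence}, part (iii)):
\[
\lambda^4=\frac{\Scross-R(\lambda)}{C_{\mathrm{cross}}}.
\]
Substituting gives
\[
\EPRtot^{\,2}=\frac{\alpha_2^2}{C_{\mathrm{cross}}}\Scross-\frac{\alpha_2^2}{C_{\mathrm{cross}}}R(\lambda),
\]
and the last term is $O(\lambda^6)$ since $\alpha_2$ and $C_{\mathrm{cross}}$ do not depend on $\lambda$. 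This is Eq.~\eqref{eq:epr-dcross}.

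For the square-root form I would write
\[
\sqrt{\Scross}=\sqrt{C_{\mathrm{cross}}}\,\lambda^2\bigl(1+O(\lambda^2)\bigr)^{1/2}=\sqrt{C_{\mathrm{cross}}}\,\lambda^2+O(\lambda^4),
\]
using $\sqrt{1+x}=1+O(x)$. Multiplying by $\alpha_2/\sqrt{C_{\mathrm{cross}}}$ gives $\alpha_2\lambda^2+O(\lambda^4)=\EPRtot+O(\lambda^4)$, which is Eq.~\eqref{eq:epr-sqrt-bound}.

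The implication $\Scross>0\Rightarrow\EPRtot>0$ needs an argument that does not rely on the small-$\lambda$ expansion. If $\Scross>0$, then $\lambda\ne0$: at $\lambda=0$ we have $S_{12}^{\mathrm{true}}\equiv0$, so $\alpha\beta\equiv0$ and $\Scross=0$ exactly. Since $\alpha_2>0$ by Eq.~\eqref{eq:alpha2}, it follows that $\EPRtot=\alpha_2\lambda^2>0$. Combined with Eq.~\eqref{eq:epr-single}, this gives the statement that all single-channel estimators return zero.

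The main obstacle is how the coefficient mapping between the two pieces is read. $\alpha_2$ is a continuous-time quantity, while $C_{\mathrm{cross}}$ and $\Scross$ were derived in discrete time. I would therefore state the corollary under the correspondence $a_i=e^{-\gamma_i}$, $b=e^{-\gamma_f}$, $\sigma_{\epsilon_i}^2=D_i(1-a_i^2)/\gamma_i$, $\sigma_\eta^2=D_f(1-b^2)/\gamma_f$, and treat $C_{\mathrm{cross}}$ as a fixed positive constant once that mapping is chosen. Alternatively, one could work entirely in continuous time, using the cancellation \eqref{eq:cancellation-ct} to compute the continuous-time $\mathcal I_F$. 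Either way, the algebra above goes through, because it only uses that both coefficients are $\lambda$-independent and strictly positive.
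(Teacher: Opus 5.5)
Your proposal is correct and is essentially the paper's own argument: the paper obtains the corollary simply by combining the exact $\EPRtot=\alpha_2\lambda^2$ of Proposition~\ref{thm:epr-exact} with $\Scross=C_{\mathrm{cross}}\lambda^4+O(\lambda^6)$. Your additions are sound and sharpen that combination without changing it: you make explicit that $u_1u_2\neq0$ is needed for $C_{\mathrm{cross}}>0$, you derive $\Scross>0\Rightarrow\lambda\neq0\Rightarrow\EPRtot>0$ without using the expansion, and you take $\Scross$ and $C_{\mathrm{cross}}$ from one and the same (discrete- or continuous-time) model, so the argument never depends on the paper's claimed invariance of $\alpha_2^2/C_{\mathrm{cross}}$ under the mapping.
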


\noindent\emph{Identification hierarchy.}---The cross-spectral residual $\Scross>0$ certifies that a hidden common driver is present (structural detection); the additional one-way coupling condition ensures that this driver entails dissipation (thermodynamic identification). The one-way condition is therefore not a technical limitation but a physical identification condition for this benchmark extension.

\begin{figure}[t]
  \centering
  \includegraphics[width=\textwidth]{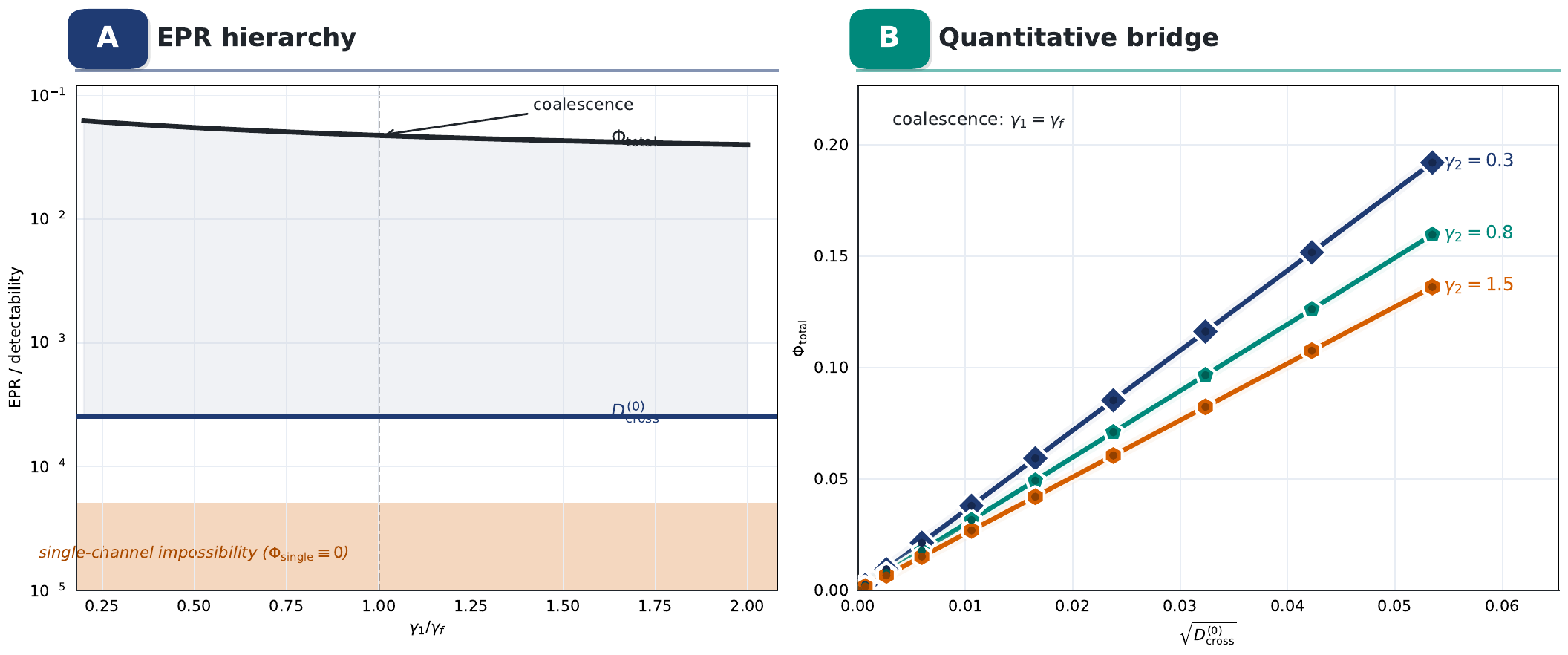}
  \caption{\textbf{Hidden entropy production through the cross-spectral witness.} \textbf{A,} The full-system steady-state entropy production rate $\EPRtot=\alpha_2\lambda^2$ (black) is strictly positive across the entire range of observed-to-hidden timescale ratios $\gamma_1/\gamma_f$, while the cross-spectral witness $\Scross=C_{\mathrm{cross}}\lambda^4$ (blue, horizontal) remains finite and independent of $\gamma_1$. Single-channel entropy production estimators return zero identically (Lucente--Crisanti)---all dissipative content of the system is carried by the cross-spectral witness. \textbf{B,} At exact timescale coalescence $\gamma_1=\gamma_f$, the leading-order relation $\EPRtot=(\alpha_2/\sqrt{C_{\mathrm{cross}}})\sqrt{\Scross}+O(\lambda^4)$ appears as a family of straight rays for representative values of $\gamma_2$, with slope set by $\alpha_2/\sqrt{C_{\mathrm{cross}}}$. The two panels together establish the identification hierarchy from cross-spectral detection to thermodynamic dissipation under one-way coupling.}
  \label{fig:epr-hierarchy}
\end{figure}

\section{Finite-Rank Latent-Sector Verification (Route~1)}
\label{app:finite-rank}

We record a direct numerical verification of Eq.~\eqref{eq:rho-csm} beyond the rank-one benchmark, using genuinely rank-$2$ latent sectors with both same-sign and mixed-sign channel loadings. The reproducible driver script is \texttt{code/route1\_finite\_rank\_feasibility.py}; source CSVs are \texttt{code/results/route1\_finite\_rank\_population.csv} and \texttt{code/results/route1\_finite\_rank\_detection.csv}.

\paragraph{Population-level filter invariance.} For each of three latent scenarios---(i) rank-$2$ shared-sector with loadings $L=\bigl(\begin{smallmatrix}1&0.5\\0.5&1\end{smallmatrix}\bigr)$, (ii) rank-$2$ mixed-sign sector with loadings $L=\bigl(\begin{smallmatrix}1&-0.5\\0.5&1\end{smallmatrix}\bigr)$, and (iii) a diagonal-only latent sector with no cross coupling (null control)---we computed the exact $\Dcoh$ under four distinct channel-separable observed filters: coalescent AR(1) ($a_1=a_2=0.7$), far-apart AR(1) ($a_1=0.2,a_2=0.8$), mixed AR(2), and mixed AR(3). The numerical spread of $\Dcoh$ across observed-filter classes was $0.00\times 10^0$ for every scenario, and the pointwise maximum discrepancy between the raw $\rho^2(\omega)$ and the filter-free closed form~\eqref{eq:rho-csm} was $1.11\times 10^{-16}$ (machine precision). For the no-cross-coupling null scenario, $\Dcoh=0$ exactly for every filter class. This confirms the operator Cauchy--Schwarz closure and the filter-free cancellation of Eq.~\eqref{eq:rho-csm} for every finite-rank Hermitian PSD latent sector tested.

\paragraph{Finite-sample detection under misspecification.} With sample size $N=1024$, burn-in $250$, and $48$ Monte Carlo trials per scenario, we compared three detectors at supercritical coupling $\lambda/\lambda_c=1.10$: (a) the theorem-guided $\hat D_{\mathrm{coh}}$ with phase-randomized surrogates ($49$ replicates), (b) a rank-one parametric plugin fit to a misspecified (rank-$2$) truth, and (c) a single-channel irreversibility baseline. On rank-$2$ shared-sector truths, detection rates were $32/48=66.7\%$ (theorem-guided), $7/48=14.6\%$ (rank-one plugin), and $0/48=0\%$ (single-channel). On rank-$2$ mixed-sign truths they were $29/48=60.4\%$, $1/48=2.1\%$, and $0/48=0\%$, respectively. Under the no-cross-coupling null control, the theorem-guided detector yielded $2/48=4.2\%$ false positives, consistent with the nominal $\alpha=0.05$ rate. The rank-one parametric fit collapses when the latent truth is genuinely rank-$2$, while the theorem-guided statistic remains powerful because it depends only on the coherence ratio~\eqref{eq:rho-csm} and inherits the filter-free invariance of Eq.~\eqref{eq:rho-csm}.

These checks show that the main theorem's finite-rank scope is not only an algebraic extension of the rank-one benchmark but also a finite-sample upgrade: when the hidden truth is multi-mode, the theorem-guided detector remains the appropriate statistic, and rank-one plugin fits are correspondingly misspecified.

\end{document}